\tikzset{>=stealth',auto,node distance=3cm,text height=1ex,
  every label/.append style={font=\scriptsize, label distance=-0.1cm}}
\newtheorem{theorem}{Theorem}[section]
\newtheorem{lemma}{Lemma}[section]
\newtheorem{conjecture}{Conjecture}[section]
\theoremstyle{definition}
\theoremstyle{remark}
\renewcommand{\epsilon}{\varepsilon}
\newcommand{\eps}{\varepsilon}
\title{Progress on a perimeter surveillance problem}
\author{Jeremy Avigad and Floris van Doorn}
\begin{document}

\maketitle

\begin{abstract}
  We consider a perimeter surveillance problem introduced by Kingston, Beard, and Holt in 2008
  and studied by Davis, Humphrey, and Kingston in 2019.
  In this problem, $n$ drones surveil a finite interval,
  moving at uniform speed and exchanging information only when they meet another drone.
  Kingston et al.\ described a particular online algorithm for coordinating their behavior
  and asked for an upper bound on how long it can take before the drones are fully synchronized.
  They divided the algorithm's behavior into two phases,
  and presented upper bounds on the length of each phase based on conjectured worst-case configurations.
  Davis et al.\ presented counterexamples to the conjecture for phase 1.

  We present sharp upper bounds on phase 2
  which show that in this case the conjectured worst case is correct.
  We also present new lower bounds on phase 1 and the total time to synchronization,
  and report partial progress towards obtaining an upper bound.
\end{abstract}


\section{Introduction}
\label{section:introduction}

In 2008, Kingston, Beard, and Holt \cite{kingston:beard:holt:08} considered
a problem in decentralized control in which a group of
small unmanned aerial vehicles (UAVs) or drones is required to
surveil a linear interval with changing borders.
In their model, the drones all move along the interval
at the same uniform speed and can exchange information only when they meet.
Because the borders of the interval and the number of operant drones
can change over time, the drones have imperfect information as to the
global state.

Kingston et al.\ described an algorithm
for coordinating the drones
and considered the problem of bounding the time to synchronization
in a setting where the parameters remain fixed.
They divided the algorithm's behavior into two phases
which we will call \emph{phase 1} and \emph{phase 2}.
Normalizing units so that a single drone can traverse the interval in one unit of time,
they claimed an upper bound of 3 units of time for phase 1
and an upper bound of 2 units of time for phase 2.
In each case, the bounds were based on the behavior of what they took to be
the worst-case starting configurations.

In 2019, Davis, Humphrey, and Kingston \cite{davis:et:al:19} pointed out
that the previous work did not justify the claimed characterizations of the worst-case behavior,
and, in fact, they provided a counterexample to the bound for phase 1.
As a result, there is currently no rigorously established bound on the time to synchronization
that is independent of the number of drones.

We describe the problem more precisely in Section~\ref{section:the:problem} and
discuss the previous results in greater detail.
In Section~\ref{section:upper:bounds}, we present sharp upper bounds
on the length of phase 2,
showing that the originally claimed worst-case behavior is correct.
In Section~\ref{section:lower:bounds}, we present improved lower bounds on the length of phase 1 as well as the total time to synchronization.
Finally, in Section~\ref{section:algebraic:calculation}, we present some partial results towards bounding the length of phase 1.

It is by now well understood that decentralized coordination of UAVs raises interesting combinatorial challenges \cite{bakule:08,burgard:et:al:05,sujit:beard:08}. What the Kingston--Beard--Holt example shows is that difficult combinatorial problems arise even when dealing with fairly simple models, and that new mathematical ideas and techniques are needed to handle them.

David Greve at Collins Aerospace has independently obtained
a substantially different proof of our Theorem~\ref{theorem:upper:bounds}
with $2$ replacing $2 - 1/n$ \cite{greve:unp},
and he has recently formalized the proof presented here (personal communication)
in the ACL2 verification system \cite{boyer:moore:98}.

\section{The problem}
\label{section:the:problem}

To describe the model under consideration,
it is convenient to normalize units so that the drones are surveilling the unit interval $[0, 1]$
and moving at a velocity of one unit per unit time.
At each point in time, each drone has a direction $d = \pm 1$,
where $1$ indicates that the drone is moving to the right
and $-1$ indicates that it is moving to the left.
Each drone also has an estimate of the form $((a, \ell), (b, m))$
where $a$ is the left endpoint of the interval, $\ell$ is the number of drones to the left,
$b$ is the right endpoint, and $m$ is number of drones to the right.
Kingston et al.\ wanted to consider a scenario where the data keeps changing,
so these estimates may be wrong; in particular, $a$ and $b$ do not need to be in the unit interval.
Each drone recognizes the leftmost or rightmost border when it reaches it.
Two drones can only exchange information when they meet, that is, occupy the same position in the interval.

Suppose we have $n$ drones on the unit interval, numbered from left to right $1, \ldots, n$.
The $i$th drone's \emph{left endpoint} is $(i-1)/n$, its \emph{right endpoint} is $i/n$,
and its \emph{interval} is the closed interval with those endpoints.
We say the \emph{common endpoint} of drones $i$ and $i + 1$ is the right endpoint of
drone $i$,
which is equal to the left endpoint of drone $i + 1$.
The desired situation is that each drone remains in its interval,
moving back and forth between its left and right endpoints.

Kingston et al.\ proposed the following algorithm to attain this behavior.
Write $(\alpha, \beta) = ((a, \ell), (b, m))$ for the drone's estimates.
Based on this data,
each drone can calculate the interval it \emph{thinks} it is supposed to surveil as follows:
\begin{itemize}
\item The size of the interval is $I(\alpha, \beta) = (b - a) / (\ell + m + 1)$, that is,
the length of the estimated interval divided by the number of drones.
\item The left endpoint is
\begin{align*}
  L(\alpha, \beta) &= a + \ell I(\alpha, \beta) = b - (m + 1) I(\alpha, \beta)\\
  &= \frac{a (m + 1) + b \ell}{\ell + m + 1}.
\end{align*}
\item The right endpoint is
\begin{align*}
  R(\alpha, \beta) &= a + (\ell + 1) I(\alpha, \beta) = b - m I(\alpha, \beta) \\
  &= \frac{a m + b (\ell+1)}{\ell + m + 1}.
\end{align*}
\end{itemize}
According to the algorithm, each drone continues in the direction it is moving
until one of these events occurs:
\begin{itemize}
  \item If a drone hits the left border,
  it updates its left estimate with the correct left endpoint of the interval
  and the fact that there are no drones to the left, and then it turns around.
  The case where a drone hits the right border is handled similarly.
  \item If two drones meet,
  the left one adopts the right estimate from the drone to its right
  (adding 1 to the number of drones to the right), and vice-versa.
  As a result, the two drones agree as to their estimates of the intervals they are supposed to surveil.
  The two then set their directions so that they are headed to their common endpoint.
  \item If two drones are traveling together (with consistent estimates) and reach their common endpoint,
  they split, i.e.~one of them reverses direction in order to stay in its estimated interval.
\end{itemize}
We assume that when two or more drones start together,
they all share their estimates at time $0$.
We call the first type of event a \emph{border} event, the second type of event a \emph{meet} event,
and the third type of event a \emph{separation} event.
Notice that, as a special case, the second and third can happen simultaneously,
if two drones meet at their common endpoint. We call that a \emph{bounce} event. Fig.~\ref{figure:sample:run} depicts a sample run of the algorithm with five drones,
with the interval extending left to right and time flowing downward.

\begin{figure}
  \begin{center}
  \includegraphics[width=0.45\textwidth]{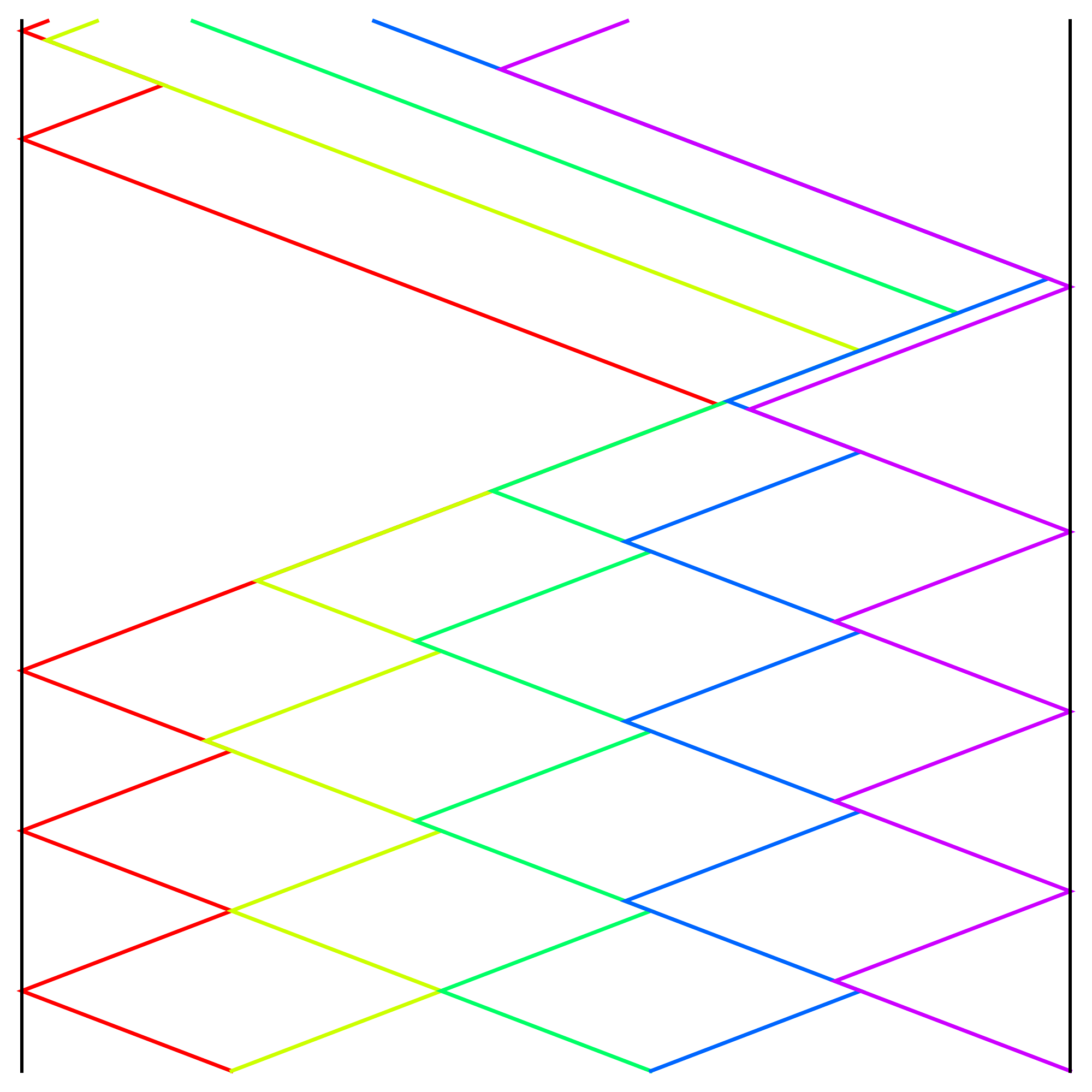}
  \end{center}
  \caption{A sample run of the algorithm with $n = 5$ drones. }
  \label{figure:sample:run}
\end{figure}

The arguments below can be made rigorous by formalizing the notion of a \emph{configuration}
(that is, a time, $t$, and the sequence of positions, directions and estimates of all the drones at that time),
and, given a configuration, the next configuration at which one of the events above occurs.
It then makes sense to talk about the sequence of eventful configurations
from a given start configuration,
and all the informal claims below can be interpreted in terms of that.

One should also establish that the algorithm does not exhibit \emph{Zeno} behavior, i.e.~that for every time $t$ there is a finite sequence of events that extends past time $t$. To prove this, suppose otherwise, consider the infinite sequence of events determined by the algorithm, and let $T$ be the least upper bound on their times. Then for every $\eps > 0$, there are infinitely many events in the time interval $(T-\eps, T)$. Since there are only finitely many drones, at least one drone has to be involved in infinitely many events, which means that one drone has to change direction infinitely many times. But it is not hard to show that if drone $i+1$ makes two consecutive left turns, then drone $i$ must turn right in the interim; so if drone $i + 1$ changes direction infinitely often, then so does drone $i$. With the symmetric argument, it therefore follows that \emph{all} the drones change direction infinitely often in the interval $(T-\eps, T)$. But now if we take $\eps < 1 / 2n$, then at time $T - \eps$ either there is a pair of drones $i$ and $i + 1$ that are not within $2 \eps$ of each other, or the leftmost drone is not within $\eps$ of the left border, or the rightmost drone is not within $\eps$ of the right border. In the first case, drone $i$ can turn at most once in $(T-\eps, T)$, and in the last two cases, the relevant drone can turn at most once. Thus we have contradicted the assumption that the algorithm exhibits Zeno behavior.

Given a particular start configuration, say a drone is \emph{left synchronized} at time $t$
if beyond that point it never goes to the left of its left endpoint,
and similarly for \emph{right synchronized}.
A drone is \emph{synchronized} at time $t$ if it 
is left and right synchronized.
Kingston et al.\ conjectured the following:

\begin{conjecture}
\label{conjecture:a}
From any start configuration, all drones have correct estimates by time 3.
\end{conjecture}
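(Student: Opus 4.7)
The plan is to separate the conjecture into two propagation problems and attack each. Write ``left-correct'' for the property that a drone has $a = 0$ and $\ell$ equal to the true number of drones to its left, and analogously ``right-correct.'' I would first seek a time $T_L$ by which some drone becomes left-correct, and a constant $C$ such that, once one drone is left-correct, within time $C$ every drone further right becomes left-correct. A symmetric argument for the right border would then yield the conjectured total bound if $T_L + C \le 3$ and the two arguments can be made to run in parallel.

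For $T_L$, I would first study the leftmost drone. The obvious attempt is: if it is moving left at time $0$, it reaches the border within time $1$; if it is moving right, it must either meet another drone (joining a cascade of meets that eventually forces some drone to turn left and reach the border) or first bounce off the right border and then travel to $0$. A careful case analysis should yield an explicit bound on $T_L$; $T_L \le 1$ would be ideal but may fail for configurations clustered to the right, so a bound like $T_L \le 3/2$ is what I would actually expect to prove first, then tighten.

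For the propagation constant $C$, the key observation is that two drones approaching each other have relative speed $2$, while two drones moving in the same direction do not approach at all unless one first turns. An inductive argument on the index $i$ of the target drone should give that, once drone $j$ is left-correct at some time $t$, drone $j+1$ becomes left-correct by time $t + O(1/n)$, using that both cannot travel in the same direction without one hitting a (possibly estimated) endpoint. Summing this telescoping bound over $j = 1, \ldots, n-1$ yields a propagation time bounded independently of $n$, and together with the bound on $T_L$ this would give the conjectured total of at most $3$.

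The hardest step, and the one I expect to be the bottleneck, is controlling the count estimate $\ell$ (as opposed to the border value $a$). The value $a$ is essentially monotone: once a drone has $a = 0$, no future meet or separation can change this. The count $\ell$, however, is updated at every meet by adopting the partner's estimate incremented by one, so a drone can easily inherit an outdated count from a partner whose own count was stale. Arguments based only on ``which drones have heard from the leftmost one'' will not suffice; one seems to need a more delicate invariant tracking simultaneously the positions and the most recent time each count estimate was verified against ground truth. This is where I expect the argument to require genuinely new ideas, and indeed the Davis--Humphrey--Kingston counterexample suggests that the conjectured bound of $3$ may fail at precisely this point.
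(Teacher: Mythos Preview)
The fundamental issue is that Conjecture~\ref{conjecture:a} is \emph{false}, so no proof can succeed. The paper does not prove this statement; on the contrary, it recalls that Davis, Humphrey, and Kingston already refuted it with $n=3$ configurations needing up to $3+1/2$ units of time, and in Section~\ref{section:lower:bounds} the paper strengthens this to configurations where phase~1 lasts arbitrarily close to $4-1/n$ (Theorem~\ref{theorem:lower:bounds}). You yourself flag this at the end of your proposal, and that suspicion is correct: the bound of $3$ simply does not hold.

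Given that, it is worth seeing where your outline would break. Your propagation argument for the border value $a$ is essentially sound --- once a drone has $a=0$, that persists --- but, as you anticipated, the count $\ell$ is the obstruction. More importantly, your hoped-for telescoping bound ``drone $j+1$ becomes left-correct by time $t + O(1/n)$ after drone $j$'' cannot hold uniformly: in the paper's counterexamples (Fig.~\ref{figure:threedrones} and Fig.~\ref{figure:moredrones}) the drones are arranged so that groups repeatedly separate just short of a border and re-meet, causing correct border information to propagate only one step per full traversal of the interval rather than in time $O(1/n)$. Thus the sum of the per-step delays is not $O(1)$ but can be made close to $3$ on top of the initial unit, pushing phase~1 toward $4-1/n$. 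Any attempt to repair the argument would need to engage with the algebraic constraints on the estimated endpoints (Section~\ref{section:algebraic:calculation}), and the paper leaves obtaining \emph{any} upper bound on phase~1 independent of $n$ as an open problem.
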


\begin{conjecture}
\label{conjecture:b}
If all drones have correct estimates at time $t$, then all drones are synchronized by time $t + 2$.
\end{conjecture}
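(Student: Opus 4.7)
The plan is to analyze the dynamics for times greater than $t$, exploiting that once estimates are correct they remain correct, since both border and meet events preserve correctness. Consequently every subsequent meeting between consecutive drones $i$ and $i+1$ is ``properly aligned'': they head together to the common endpoint $i/n$, bounce, and separate, with drone $i$ entering its interval $[(i-1)/n, i/n]$ moving left and drone $i+1$ entering $[i/n, (i+1)/n]$ moving right.

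First I would establish a local observation: after drones $i$ and $i+1$ meet at position $p$, they reach $i/n$ and bounce within time $|p - i/n|$, and afterward are in their respective intervals heading in the correct directions. A drone $i$ may, however, exit its interval again through the other endpoint $(i-1)/n$ if drone $i-1$ is not there to meet it; thus synchronization of an interior drone requires coordinated bounces at \emph{both} endpoints, which is the core technical content.

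The heart of the proof would be a cascade analysis centered on the worst-case configuration. A natural candidate is: drone $1$ at position $0$ and drones $2, \ldots, n$ at slightly increasing positions, all moving right. The drones form a ``pack'' that propagates rightward, with drone $n$ reaching the right border at time close to $1$. After bouncing, the pack moves leftward as a unit, peeling off one drone at each common endpoint $(n-1)/n, (n-2)/n, \ldots, 1/n$, each peel taking $1/n$ time. After the pack reduces to drone $1$ alone, drone $1$ bounces off the left border and meets drone $2$ at $1/n$ just in time to enter the synchronized regime alongside the already-synchronized drones further right. A careful accounting yields total synchronization time at most $2 - 1/n$, implying (and sharpening) the conjectured bound of $2$. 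For arbitrary initial configurations, I would argue by case analysis on the starting directions (using the left--right symmetry to halve the cases) that no configuration takes longer than this aligned cascade.

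The hardest part is making the cascade analysis rigorous. Handling simultaneous meet events, where multiple drones coincide at a common endpoint, requires a careful specification of the algorithm's behavior in these degenerate situations and verification that the cascade propagates as expected. Showing that the ``chase'' configuration is genuinely worst-case involves ruling out more exotic arrangements that combine leftward and rightward initial motions. Obtaining the sharp constant $2 - 1/n$ requires precise tracking of the pack-reduction timeline; the weaker bound of $2$ in the conjecture might alternatively follow from more uniform estimates that separately bound a ``convergence to pack'' phase and a ``dispersal from pack'' phase.
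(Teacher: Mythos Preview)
Your proposal correctly identifies the conjectured worst-case configuration and its timeline, but the core of the argument---``for arbitrary initial configurations, I would argue by case analysis on the starting directions \ldots\ that no configuration takes longer than this aligned cascade''---is not a proof. This is precisely the approach of Kingston, Beard, and Holt that the paper sets out to repair: they too claimed a bound based on an asserted worst case, without a mechanism to rule out all other configurations. With $n$ drones there are $2^n$ direction patterns and a continuum of positions, and the interactions are not local (whether drone $j$ overshoots its left endpoint depends on where drone $j-1$ is, which depends on $j-2$, etc.). Your own final paragraph concedes that ruling out ``more exotic arrangements'' is the hard part, and nothing in the proposal supplies a tool for doing so. The authors remark explicitly that they tried reducing configurations to simpler ones and could not make it work.

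The paper's argument avoids worst-case comparison entirely. The key idea you are missing is to work \emph{one-sidedly}: define drone $j$ to be \emph{left synchronized} if it never again crosses its left endpoint, and prove by induction on $j$ that drone $j$ is left synchronized by time $1 + (j-1)/n$. Two ingredients drive the induction. First, a simple distance count shows every consecutive pair has met by time $1$. Second, once drones $j$ and $j+1$ have met and drone $j$ is left synchronized, one shows (by a short case analysis on whether $j$ is currently moving left or right, together or apart from $j+1$) that drone $j+1$ becomes left synchronized within an additional $1/n$ units of time. Applying the same argument to right synchronization and taking the maximum yields $2 - 1/n$. No global comparison to a worst case is needed; the $1/n$ increments accumulate directly.
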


\noindent We call the time between the start and the moment that all drones have correct estimates \emph{phase 1}, and the time after phase 1 until the moment that the drones are synchronized \emph{phase 2}.

Kingston et al.\ sketched a proof of each conjecture,
in each case based on a claims that a certain start configurations gave rise to the worst-case behavior.
The two conjectures imply that for any start configuration, the drones are synchronized by time 5.

Davis et al.\ showed that Conjecture~\ref{conjecture:a} is false,
by exhibiting counterexamples with $n = 3$ that require up to $3 + 1/2$ units of time before all the drones have correct estimates.
For that purpose, they used the AGREE model checker \cite{cofer:et:al:12}, which required fixed bounds on all the parameters.
In particular, they had to limit the estimates of the number of drones to the left or right at 20.
With those restrictions, the tool reported upper bounds of $3 + 2/3$ on the time until all three drones have complete information, and $4 + 1/3$ units of time until full synchronization.
The tool also reported absolute upper bounds of 2 on phase 2, with $n \le 6$; they report that the verification for $n = 6$ required about 20 days of computation using 40 cores.
They do not report any results for larger $n$. In particular, there was no rigorously established bound on the length of either phase, or total time to synchronization, that is independent of $n$.

Conjecture~\ref{conjecture:b} is clearly implied by the following statement:
if all drones start with correct estimates then they are synchronized by time 2.
The implication follows, because we can consider the configuration at time $t$ as the new start configuration.
In Section~\ref{section:upper:bounds}, we prove the following:
\begin{theorem}
\label{theorem:upper:bounds}
Assuming all the drones have the correct estimates, they are all synchronized at time $2 - 1/n$.
\end{theorem}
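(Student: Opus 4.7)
My plan is to bound, for each drone $i$, the latest time it can be outside its assigned interval $[(i-1)/n,\, i/n]$. I begin by establishing the basic invariant that drones preserve their initial left-to-right order throughout the run: when two adjacent drones meet, the algorithm sends them together to their common endpoint and then splits them with the left drone heading left and the right drone heading right, so no two drones ever cross. By the symmetry $x \mapsto 1 - x$ (under which drone $i$ becomes drone $n+1-i$ and ``left'' swaps with ``right''), it suffices to show that each drone $i$ is right-synchronized by time $2 - 1/n$.

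The main technical step is to bound, for each $i \in \{1, \ldots, n-1\}$, the first time $\tau_i$ at which drones $i$ and $i+1$ undergo a separation event at their common endpoint $i/n$. I plan to first analyze the isolated two-drone case, treating $i$ and $i+1$ as if they were alone in $[0,1]$ with the correct common endpoint $i/n$. A direct computation across the four possible combinations of initial directions shows $\tau_i \le 2 - 1/n$: the worst case is when both drones start near one end of the interval moving in the same direction, so that the leading drone must bounce off the far border, collide with the trailing drone near that border, and the pair must then traverse nearly the full length back to $i/n$, accumulating a total time approaching $2 - 1/n$ from below. This matches the well-known conjectured worst-case configuration.

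The principal obstacle is extending this bound to the full $n$-drone setting, where drones $i-1$ and $i+2$ can perturb the trajectories of $i$ and $i+1$ via additional meet events. My plan is to argue that such interactions never delay the separation at $i/n$ beyond the isolated-case bound: whenever $i$ or $i+1$ meets an outside neighbor, the algorithm sends it toward that new common endpoint, which in turn brings the pair $\{i, i+1\}$ together no later than in the undisturbed case. Formalising this likely requires a coupling argument comparing drone $i$'s trajectory in the $n$-drone system to that in an isolated two-drone subsystem, combined with a case analysis on the initial directions; this is the step I expect to be the most delicate. Once $\tau_i \le 2 - 1/n$ is established for every $i$, a short further argument shows that after time $\max_i \tau_i$ the drones settle into a synchronized oscillation of period $2/n$, with each drone $i$ confined to $[(i-1)/n,\, i/n]$ thereafter, which yields the theorem.
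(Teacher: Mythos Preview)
Your plan has a genuine gap in the final step. Bounding the \emph{first} separation time $\tau_i$ of each adjacent pair does not imply synchronization after $\max_i \tau_i$. A separation of drones $i$ and $i+1$ at their common endpoint $i/n$ guarantees that drone $i+1$ stays to the right of $i/n$ thereafter only under the additional hypothesis that drone $i$ is already left-synchronized at the moment of separation; without that hypothesis, drone $i$ can subsequently overshoot its own left endpoint, return, meet drone $i+1$ to the \emph{left} of $i/n$, and drag drone $i+1$ out of its interval. Concretely, with $n=3$, start drone~1 at $0.9$ moving left and drones~2,~3 together at $1$ moving left. Drones~2 and~3 separate at $2/3$ at time $\tau_2=1/3$; drone~2 then chases drone~1 almost to the left border, and they do not separate at $1/3$ until $\tau_1\approx 1.23$. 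Meanwhile drone~3 has bounced off the right border and is already well to the left of $2/3$; drones~2 and~3 next meet near $0.38$ at time $\approx 1.28 > \max_i\tau_i$, so drone~3 is outside its interval after $\max_i\tau_i$. Your ``short further argument'' therefore cannot exist in the form you describe.

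The paper's proof is organized around exactly this dependency. It shows (easily) that every adjacent pair has met by time~$1$, and then proves an induction step: if drones $1,\dots,j$ are left-synchronized at time $t$ and drones $j,j+1$ have already met, then drone $j+1$ is left-synchronized by time $t+1/n$. The key lemma underlying this step is conditional: if drone $j$ is left-synchronized and $j,j+1$ separate at $j/n$, then $j+1$ is left-synchronized. Starting from drone~1 (trivially left-synchronized) and iterating gives left-synchronization of drone $n$ by time $1+(n-1)/n=2-1/n$. Your isolated two-drone analysis and the proposed coupling to the full system play no role; indeed, the authors remark that they tried reduction-to-simpler-configurations arguments of that flavor and could not make them work, which is a warning sign for the step you flag as ``most delicate.''
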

\noindent This shows that the conjecture by Kingston et al.\
as to the worst-case configurations is correct.
In Section~\ref{section:upper:bounds}, we also obtain the following additional information:
\begin{theorem}
  \label{theorem:upper:bounds:combined}
  If all drones start with incorrect estimates, and they all have correct estimates at time $t$,
  then all drones are synchronized by time $t + 1 - 1/n$.
\end{theorem}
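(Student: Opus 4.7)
The plan is to leverage Theorem~\ref{theorem:upper:bounds} together with structural constraints on the configuration at time $t$. Applied directly at time $t$, Theorem~\ref{theorem:upper:bounds} gives a bound of $t + 2 - 1/n$; the task is to shave off a unit by exploiting the hypothesis that the drones began with incorrect estimates. The intuition is that correct information can only enter the system via border events and can only propagate via meet events, so if the drones start incorrect, then by time $t$ a nontrivial amount of travel has already occurred, and the configuration at time $t$ cannot be the worst case of Theorem~\ref{theorem:upper:bounds}.

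The first step is to identify the precise event responsible for the final transition to ``all correct'' at time $t$. Since estimates are piecewise constant and change only at border and meet events, there is a well-defined last such event at or immediately before time $t$; I would case split on its type. In a border case, some drone is located at $0$ or $1$ at time $t$, turning inward; its neighbor is therefore at least $1/n$ away, and by chasing the argument used for Theorem~\ref{theorem:upper:bounds} I expect the cascade of subsequent synchronizations to fit inside $1 - 1/n$. In a meet case, the two involved drones are colocated and at least one of them carries information that must have originated at a border at some earlier moment; this travel history forces the meeting point away from the worst-case positions.

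The most likely way to make both cases uniform is a reduction: exhibit, for the given run, an ``augmented'' run in which all drones start with correct estimates, such that the augmented run's configuration at some time $t' \geq 1$ coincides with the original run's configuration at time $t$. Since Theorem~\ref{theorem:upper:bounds} gives synchronization of the augmented run by time $2 - 1/n$, synchronization from time $t'$ onwards takes at most $(2 - 1/n) - t' \leq 1 - 1/n$, and the original run inherits this bound from time $t$. The construction of the augmented run would run the given dynamics backward by one time unit from the configuration at time $t$, using that the earliest border-originated information reached the system no later than time $t - 1$ (which is where the $1$ unit of savings comes from).

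The main obstacle will be justifying the inequality $t' \geq 1$, i.e., showing that by the time all estimates are correct, at least one unit of time has elapsed since some drone first learned a correct border. This is where the hypothesis of \emph{incorrect} initial estimates must be used essentially; without it one can place a drone already at a border at time $0$ and the saving disappears. I expect the argument to hinge on tracking how far information can have propagated in a given amount of time: the farthest a piece of border information can travel by time $s$ is bounded by $s$, so for \emph{every} drone to have learned both borders the total elapsed time since the first border hit must be at least $1$.
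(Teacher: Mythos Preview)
Your plan has a genuine gap: the ``augmented run'' reduction cannot be made to work as stated. The phase-2 dynamics are not time-reversible (many configurations can lead to the same configuration after a meet), so ``running the dynamics backward one unit from the configuration at time $t$'' is not well-defined, and there is no reason the configuration at time $t$ should coincide with \emph{any} configuration reachable from a correct-estimates start after at least one unit of time. Your case split on the type of the last event, and the claim that ``the meeting point is forced away from the worst-case positions,'' also do not lead anywhere concrete: the worst case of Theorem~\ref{theorem:upper:bounds} is not characterized by a single position but by the fact that no pair of neighbors has met before time~$1$.

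The paper does not treat Theorem~\ref{theorem:upper:bounds} as a black box; it opens up its proof. That proof has two parts: (i) by time $1$ every consecutive pair $j,j{+}1$ has met, and (ii) once every pair has met, the induction of Lemma~\ref{lemma:induction:step} gives left synchronization in an additional $(n-1)/n$ units. The hypothesis ``all drones start with incorrect estimates and are all correct at time $t$'' replaces (i) for free: since correct border information can enter only via a border event and can propagate only via meets, the correct left estimate must have passed from drone $1$ through every pair $(i,i{+}1)$ to reach drone $n$, and symmetrically for the right estimate. Hence every consecutive pair has met by time $t$, and in fact each pair has been together at a moment when both already held correct information; this is exactly what Lemma~\ref{lemma:induction:step} needs (and one checks the supporting lemmas only use correctness for the two drones involved). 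Running the induction from $j=1$ then yields synchronization by $t + (n-1)/n = t + 1 - 1/n$. Your information-propagation instinct is right, but the payoff is ``all pairs have met,'' not a backward-in-time embedding.
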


In Section~\ref{section:lower:bounds}, we improve the lower bounds as follows:

\begin{theorem}
\label{theorem:lower:bounds}
For every $n\ge 3$ and $\varepsilon > 0$, there is a start configuration
such that drones do not have correct estimates before time $4 - 1 / n - \varepsilon$,
and are not fully synchronized before time $5 - 3 / n - \varepsilon$.
\end{theorem}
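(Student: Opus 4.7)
The plan is to exhibit, for each $n \ge 3$ and each small $\delta > 0$, an explicit starting configuration that witnesses both lower bounds simultaneously, with $\delta$ chosen small enough as a function of $\varepsilon$ that the claimed time bounds hold.

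The first ingredient is a configuration in which correct information can only propagate slowly along the line of drones. I would place all $n$ drones within an interval of width $O(\delta)$ near one border, say near position $1$, all moving right, and all carrying the same grossly incorrect estimate of the form $((-k, k), (b, 0))$ with $k$ large. The drones then behave as a tight cluster: at time roughly $\delta$ they bounce off the right border, updating their right estimates to the true right endpoint; by time roughly $1$ the cluster reaches the left border, where the leading drone learns the correct left endpoint. That drone bounces, and correct information about the left endpoint together with the correct drone counts $\ell_i = i-1$ must then propagate serially back through the cluster via successive meet events as the other drones also bounce and travel. Tuning the initial offsets between drones by $O(\delta)$ together with the initial estimate, we can arrange that the last drone to attain correct estimates does so just before time $4 - 1/n$; the loss $-1/n$ appears because, after the leftmost drone has bounced back with correct information, the furthest-right drone is distance $\approx 1 - 1/n$ away and the two can only reconcile after another round trip along the interval.

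For the second bound, I would continue tracing the trajectory past the moment that the last estimate becomes correct. Theorem~\ref{theorem:upper:bounds:combined} gives an upper bound of $1 - 1/n$ on the additional time needed; the goal is to show that this configuration nearly saturates this, requiring exactly $1 - 2/n$ more before full synchronization. The key point is that at the end of phase 1 a drone is still moving away from its target interval $[(i-1)/n, i/n]$ and lies at distance roughly $1 - 2/n$ from it, so another $1 - 2/n$ time units are required before it first enters its interval, giving the total bound $(4 - 1/n) + (1 - 2/n) = 5 - 3/n$.

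The main obstacle is choosing the initial estimates and positions carefully enough that early meet events do not accelerate synchronization. Specifically, when two drones in the cluster meet and reconcile, the updated common estimate must remain wrong about the drone counts, so that they continue to travel together rather than separating toward their supposed target intervals. A secondary obstacle is the bookkeeping: tracking the precise sequence of border, meet, bounce, and separation events over roughly four units of time requires a disciplined case analysis. Once the configuration is fixed and the estimates chosen appropriately, however, the drones' trajectories are piecewise linear, and verification of the exact event times reduces to a routine if tedious computation.
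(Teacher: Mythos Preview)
Your construction does not achieve the claimed bound, and the missing idea is essential. If all $n$ drones start clustered within $O(\delta)$ of one another with identical estimates, then when drone $n$ hits the right border and turns, it meets drone $n-1$ within time $O(\delta)$, who meets drone $n-2$, and so on; correct right information spreads through the whole cluster almost instantly. The same happens at the left border. With your description the entire system has correct estimates by time roughly $1 + O(\delta)$, not $4 - 1/n$. The phrase ``tuning the initial offsets \ldots\ we can arrange that the last drone attains correct estimates just before time $4-1/n$'' is where the actual content of the proof would have to go, and nothing in the proposal indicates how to do it.

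The paper's construction is built on a different mechanism. The drones start near the left border in \emph{distinct groups with distinct, carefully chosen estimates}, arranged so that a group \emph{separates just before} reaching a border. Only the outermost drone then touches the border and learns the correct endpoint; the drone that split off immediately meets the neighboring group headed the other way, merges with it, and travels back across the interval still carrying wrong information. This separation-then-regroup pattern repeats at alternating borders, producing roughly four full traversals of $[0,1]$ before the correct left and right data finally coexist in every drone. For $n=3$ the paper gives explicit initial data $(\alpha_i,\beta_i)$ with $N = 1/\delta$ large; for general $n$ one appends extra drones to the group containing drones $2$ and $3$. The final meet that completes phase~1 occurs near position $1/n$ after three full crossings and a partial fourth, which is where the $4-1/n$ comes from --- not from a distance argument of the kind you sketch. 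Achieving this requires solving for initial estimates that place the computed common endpoints $R(\alpha,\beta)$ close to $1$, then close to $0$, then close to $1$ again across successive rounds of information exchange; this algebraic content is the heart of the construction and is absent from your plan.
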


We mention in passing that the case $n = 1$ is trivial; a single drone is already synchronized, though it may not have correct estimates until time 2.
The case $n = 2$ is also easy to analyze; drones have correct estimates by time 2 and are synchronized by time $2 + 1/2$.
Both these bounds are sharp, which can be seen by having both drones start together
near the left border of the interval, moving right, and having them separate near the right border of the interval. So $n = 3$ is the first interesting case.

It is important to note that Kingston et al.\
were not looking for an algorithm to synchronize all the drones as quickly as possible.
For that, having all drones move all the way to the left to get the correct information about the left border
and then move all the way to the right does better than the one proposed.
Rather, they were independently interested in the behavior of that particular algorithm
for updating information in the face of changing borders
and addition or subtraction of drones.
Given that, the question about worst-case behavior even under fixed conditions is natural.

The description of the algorithm leaves two things unspecified.
First, it does not specify whether the information that each drone has
must be consistent with its current position. For example, it does not specify
whether a drone can think that the right border is at position 0.8
when the drone itself it is at position 0.9.
Second, it does not specify what happens when
a group of three or more drones come together
and determine that three of them are within the middle drone's interval;
in that case, the middle drone can escort either neighbor to their common
border.
Neither of these issues bears on the results reported below, since
our upper bound only concerns phase 2,
where these issues do not arise,
and our lower bounds meet the more stringent requirement
that all drones have information consistent with their positions.
Regarding the first issue, we note in passing that one can show
that if the drones start with consistent information,
temporary inconsistencies do not affect the behavior of the algorithm.
Regarding the second issue, we note that the strongest upper bound
will allow for nondeterminism and allow the middle drone
to go to either endpoint.

\section{An upper bound on phase 2}
\label{section:upper:bounds}

Once the drones have the correct estimates as to the left and right endpoints
and the number of drones on either side,
each drone knows its proper interval,
and the behavior of the algorithm from that point on can be described more simply:
when two drones meet, they escort each other to their common endpoint and then separate.
Our goal is to prove Theorem~\ref{theorem:upper:bounds},
which guarantees that the drones are all synchronized within $2 - 1/n$ units of time.

By symmetry, it suffices to show that all the drones are left synchronized
by time $2 - 1/n$.
Kingston, Beard, and Holt \cite{kingston:beard:holt:08} gave a short argument that all drones are
eventually left synchronized, although the bound that is implicit in that
argument is linear in $n$.
The argument goes as follows: suppose at some time, $t$, drones $1, \ldots, j$
are left synchronized.
Eventually, drone $j$ will meet drone $j + 1$, and then they will travel to their
common endpoint and separate.
It suffices to show that at this point, $j + 1$ in left synchronized,
because then by induction we have that all drones are eventually left synchronized.
We present their proof of this in Lemma~\ref{lemma:left:synchronized:after:separation}.
Our proof of Theorem~\ref{theorem:upper:bounds} is based on a subtle refinement of their argument.

\begin{lemma}
\label{lemma:direction:change}
Suppose that at time $t$ drone $j$ is moving to the right.
Then the next time drone $j$ changes direction,
it is at or to the right of its right endpoint.
The same is true with ``right'' replaced by ``left.''
\end{lemma}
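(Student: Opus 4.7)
The plan is to prove the ``right'' version by a case analysis on the event at which drone $j$ next changes direction; the ``left'' version follows by symmetry. We are in phase~2, so each drone knows its own interval $[(j-1)/n,\, j/n]$, and the only events are border, meet, and separation events. The key structural fact I will use is that the drones maintain their left-to-right order: since the meet rule sends both drones to their common endpoint and then separates them, no drone ever crosses a neighbor.

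Using the ordering, I would first restrict the possibilities. While drone $j$ is moving right it cannot meet drone $j-1$ (which lies to its left and is either stationary while paired or else not able to catch up at equal speed), so the only meet available is with drone $j+1$. For the same reason drone $j$ can only reach the right border, and only if $j = n$; in that case the border is at $1 = j/n$, which is drone $j$'s right endpoint, so the claim holds trivially.

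The main case is the next meet with drone $j+1$, which occurs at some position $p$. After the meet both drones head to the common endpoint $j/n$. If $p > j/n$, this common endpoint lies to the left of $p$, so drone $j$ reverses immediately at $p$, already to the right of its right endpoint. If $p \le j/n$, the two travel right together until they reach $j/n$ and separate there; drone $j$'s reversal then happens exactly at its right endpoint. The only other scenario is that at time $t$ drone $j$ is already part of a right-moving pair: if the companion is drone $j+1$ the pair is heading to $j/n$ and drone $j$ reverses there, while if the companion is drone $j-1$ the pair is heading to $(j-1)/n$ and at the ensuing separation drone $j$ simply continues right into its own interval without changing direction, which reduces the situation to one of the previous cases.

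I do not expect a serious obstacle here --- the argument is essentially bookkeeping over the finitely many event types compatible with rightward motion --- but the step that needs the most care is the sub-case split on the sign of $p - j/n$ in the meet event, where one must notice that this sign determines whether the reversal is immediate (at $p$) or deferred to the subsequent separation (at $j/n$); in both sub-cases the reversal position is $\ge j/n$, which is exactly what the lemma asserts.
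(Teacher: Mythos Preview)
Your proof is correct and follows essentially the same approach as the paper's: a case analysis on which event next reverses drone $j$, reducing to a meet or separation with drone $j+1$ (or the border event when $j=n$). Your sub-case split on whether the meet position $p$ exceeds $j/n$ is exactly the paper's distinction between the meet turning $j$ around (necessarily at $p \ge j/n$) versus the subsequent separation doing so (at $j/n$); the paper just states this more tersely.
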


\begin{proof}
If drone $j < n$ is moving to the right, the only two events in which it can change direction
is when meeting drone $j+1$ or separating from drone $j+1$.
If the next time drone $j$ turns left is when meeting drone $j + 1$,
then at that point they are to the right of their common endpoint, which is the right endpoint for drone $j$.
If the next time drone $j$ turns left is when separating from (or bouncing off) drone $j+1$,
then at that point they must both be at their common endpoint.

If drone $n$ is moving to the right, the only event in which it changes direction is a border event,
which is at its right endpoint.
The last observation follows by symmetry.
\end{proof}

\begin{figure}
  \begin{center}
  \begin{tikzpicture}[scale=2.4]
    \path (0,0) edge[dashed] (0,-1.2);
    \path (0.5,0) edge[thick, blue] (0,-0.5);
    \path (0,-0.5) edge[thick, blue, ->] (0.6,-1.1);
    \path (0.5,0) edge[dashed] (0.5,-1.2);
    \path (0.5,0) edge[thick, red] (1.1,-0.6);
    \path (1.1,-0.6) edge[thick, red, ->] (0.6,-1.1);
    \path (1,0) edge[dashed] (1,-1.2);
  \end{tikzpicture}
  \end{center}
  \caption{A picture proof of Lemma~\ref{lemma:left:synchronized:after:separation}. The two intervals between the vertical lines indicate the intervals of drone $j$ (blue) and drone $j+1$ (red).}
  \label{figure:left:synchronized}
\end{figure}
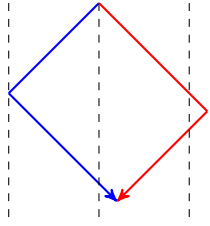

\begin{lemma}
\label{lemma:left:synchronized:after:separation}
Suppose at time $t$, drone $j$ is left synchronized
and drone $j$ and $j+1$ separate at their common endpoint.
Then drone $j+1$ is left synchronized at time $t$.
\end{lemma}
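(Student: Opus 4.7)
The plan is to carry out the trajectory analysis depicted in Figure~\ref{figure:left:synchronized}. Write $c$ for the common endpoint of drones $j$ and $j+1$ and $L = 1/n$ for each drone's interval length. The goal is to show that drones $j$ and $j+1$ necessarily meet at some position at or to the right of $c$, so that the separation rule redirects drone $j+1$ rightward before it can cross $c$.

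The first step is to apply Lemma~\ref{lemma:direction:change} to both drones. Drone $j+1$ is moving right from $c$, so its first left-turn occurs at some position $p_1 \geq c + L$ at a time $t_1 = t + (p_1 - c) \geq t + L$; by the symmetric form of the lemma, after this turn it cannot reverse again until its position drops to $\leq c$. Drone $j$ is moving left from $c$ and is left synchronized, so the symmetric lemma (first right-turn at position $\leq L_j$) together with left synchronization (position $\geq L_j$) pins drone $j$'s first right-turn at exactly $L_j$ at time $t + L$; by Lemma~\ref{lemma:direction:change} again, drone $j$ then cannot turn left until it reaches a position $\geq c$. So throughout $[t, t + 2L]$ we have $p_j(s) \leq c \leq p_{j+1}(s)$.

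The second step is to compute the meeting. If $p_1 = c + L$, both drones reach $c$ at time $t + 2L$ and bounce, sending drone $j+1$ back to the right. Otherwise $p_1 > c + L$: drone $j$ reaches $c$ first at time $t + 2L$ with no event occurring there, so it continues rightward while drone $j+1$ continues leftward, and equating the two unit-speed trajectories yields a meeting at position $p_1 - L > c$, after which the two drones escort each other leftward to $c$ and separate there with drone $j+1$ reversing to go right. In all cases drone $j+1$ never moves to the left of $c$.

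The main obstacle is justifying this trajectory description in the face of interactions with drones $j-1$ and $j+2$, which are not assumed synchronized. This is handled by Lemma~\ref{lemma:direction:change}, which rigidly constrains each drone's earliest direction change regardless of its neighbors' states, together with the observation that a meet-then-travel-together-then-separate sequence with a neighbor proceeds at unit speed to the common endpoint and therefore does not perturb a drone's position-versus-time graph. Specifically, any meeting of drone $j$ with drone $j-1$ in $[t, t+L]$ happens in $[L_j, c]$ by left synchronization and leaves drone $j$ at $L_j$ moving right at time $t + L$, and any interaction of drone $j+1$ with drone $j+2$ happens at a position $\geq c + L$ and likewise leaves drone $j+1$'s position-versus-time trajectory intact.
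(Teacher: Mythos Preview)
Your approach is the same as the paper's---reflect drone $j$ off its left endpoint using left synchronization plus Lemma~\ref{lemma:direction:change}, and compare travel times to see that drones $j$ and $j+1$ next meet at a position $\geq c$---but you stop one step too early. Your analysis takes the pair from the separation at time $t$ through their next meeting and subsequent separation at $c$, and then you conclude that ``drone $j+1$ never moves to the left of $c$.'' That conclusion does not follow yet: after the second separation, drone $j+1$ heads right again and will eventually come back toward $c$, and nothing you have written bounds its position in that next cycle. The fix is to observe that at the second separation the hypotheses of the lemma hold verbatim (drone $j$ is still left synchronized, and the pair has just separated at $c$), so the same argument repeats; the paper makes this explicit by framing the proof as an induction over successive meet/bounce/separation events between $j$ and $j+1$.

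One smaller point: your claim that ``any interaction of drone $j+1$ with drone $j+2$ happens at a position $\geq c + L$'' is not quite right. At time $t$, drone $j+2$ may well sit at a position strictly between $c$ and $c+L$, so drone $j+1$ can meet it before reaching $c+L$. This does not actually damage your argument, because after such a meeting the pair travels right together to $c+L$ and separates there, so drone $j+1$ still first turns left at a position $\geq c+L$ and its position-versus-time graph is unchanged---but the sentence as written is false.
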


\begin{proof}
By induction, we show that every subsequent meet, bounce, and separation event
involving $j$ and $j + 1$ occurs to the right of their common endpoint.

After the separation, drone $j$ is moving to the left.
By Lemma~\ref{lemma:direction:change}, the next time it changes direction,
it is at or to the left of its left endpoint.
Since it is left synchronized,
we know it is \emph{at} its left endpoint.
Similarly, after the separation, drone $j+1$ is moving to the right,
and the next time it changes direction is it at or to the right of its right endpoint.
So the next time drone $j$ and $j+1$ meet,
they are at or to the right of their common endpoint,
since drone $j+1$ must has taken at least as long to turn around as drone $j$;
see Fig.~\ref{figure:left:synchronized} for a visual depiction.
They then travel left to their common endpoint and separate, and the situation repeats.
\end{proof}

We now draw out two useful consequences of Lemma~\ref{lemma:left:synchronized:after:separation}:

\begin{lemma}
\label{lemma:left:synchronized:sufficient:a}
Suppose at time $t$ drone $j$ and $j+1$ are together moving to the left,
and drone $j$ is left synchronized.
Then drone $j + 1$ is also left synchronized at time $t$.
\end{lemma}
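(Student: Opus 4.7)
The plan is to reduce this to Lemma~\ref{lemma:left:synchronized:after:separation} by looking forward in time to the separation event that must immediately follow time $t$. First I would observe that if drones $j$ and $j+1$ are together and moving leftward at time $t$, then their common position $p$ must satisfy $p \geq j/n$. This is because, according to the algorithm, once two drones meet they set their directions so as to head toward their common endpoint; so two drones traveling together leftward must be at or to the right of their common endpoint $j/n$, since otherwise they would be moving away from it rather than toward it.

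Next, let $t' = t + (p - j/n)$ be the moment at which drones $j$ and $j+1$ reach $j/n$ together and undergo a separation event: drone $j$ continues leftward into its own interval, and drone $j+1$ reverses to head right into its interval. Drone $j$ is left synchronized at $t$, which is a property inherited at any later time, so in particular it is left synchronized at $t'$. Applying Lemma~\ref{lemma:left:synchronized:after:separation} at time $t'$ then gives that drone $j+1$ is left synchronized at time $t'$, i.e.\ drone $j+1$ does not go to the left of $j/n$ at any time after $t'$.

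Finally, I would upgrade left synchronization from $t'$ back to $t$ by noting that on the interval $[t,t']$, drone $j+1$ travels together with drone $j$ from position $p \geq j/n$ down to $j/n$, so its position is monotonically decreasing but never drops below its left endpoint $j/n$. Combining this with left synchronization at $t'$ yields that drone $j+1$ never goes to the left of $j/n$ after time $t$, which is the required conclusion.

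The main subtlety is the first step: one needs to rule out the possibility that drones $j$ and $j+1$ could be together and moving leftward while already located to the left of their common endpoint. This is handled by appealing to the algorithm's rule that drones traveling together (with consistent estimates) are always headed toward their common endpoint, combined with the observation that once they reach that endpoint a separation event occurs. Aside from that, the argument is a direct application of the preceding lemma.
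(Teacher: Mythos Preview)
Your proposal is correct and follows essentially the same approach as the paper's proof: observe that two drones moving left together must be at or to the right of their common endpoint, wait until the separation event at that endpoint, apply Lemma~\ref{lemma:left:synchronized:after:separation} there, and then pull the conclusion back to time $t$ since drone $j+1$ stays at or above $j/n$ on $[t,t']$. Your write-up is somewhat more explicit about the details (computing $t'$, the monotonicity on $[t,t']$, the appeal to the algorithm's ``head toward the common endpoint'' rule), but the argument is the same.
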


\begin{proof}
If they are together and moving to the left, they are to the right of their common endpoint.
Eventually they will reach their common endpoint and separate, say, at time $t'$.
By Lemma~\ref{lemma:left:synchronized:after:separation},
drone $j + 1$ is left synchronized at time $t'$.
But since drone $j + 1$ is to the right of its left endpoint between time $t$ and $t'$,
it is in fact left synchronized at time $t$.
\end{proof}

\begin{lemma}
\label{lemma:left:synchronized:sufficient:b}
Suppose at time $t$ drone $j < n$ is at or to the right of its right endpoint, moving right,
and left synchronized.
Then drone $j + 1$ is left synchronized at time $t$.
\end{lemma}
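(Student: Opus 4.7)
\emph{Proof plan.} Let $p$ denote the common endpoint of drones $j$ and $j+1$, i.e.\ drone $j$'s right endpoint and drone $j+1$'s left endpoint. The strategy is to trace the motion forward from time $t$ until drones $j$ and $j+1$ perform a separation at $p$ at some later time $t'$, and then invoke Lemma~\ref{lemma:left:synchronized:after:separation} at that moment; it then remains only to verify that drone $j+1$ stays at or to the right of $p$ throughout the interval $[t, t']$, so that the conclusion can be pushed back from $t'$ to $t$.

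First, I would observe that since drones never cross one another, drone $j+1$ is at position $\geq p$ at time $t$. Next, I would apply Lemma~\ref{lemma:direction:change} to drone $j$: its next direction change happens at or to the right of $p$, and drone $j$ is moving rightward in the meantime, so its position remains $\geq p$ throughout this interval. The real work is in arguing that this direction change must be a meet event with drone $j+1$. A border event is ruled out because $j < n$. An interaction with drone $j-1$ is ruled out because, both drones having unit speed, the gap between drones $j-1$ and $j$ is non-decreasing as long as drone $j$ moves right, and under the hypothesis this gap is strictly positive at $t$: any configuration with drones $j-1$ and $j$ colocated and drone $j$ moving right (together, just separated, or just bounced at $p - 1/n$) would force drone $j$'s position to be $p - 1/n < p$, contradicting the hypothesis. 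Finally, drone $j$ cannot already be traveling together with drone $j+1$ at time $t$, since such a pair at position $\geq p$ would be heading leftward toward their common endpoint $p$, not rightward.

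Granted these reductions, drones $j$ and $j+1$ meet at some time $s^* > t$ at a position $q^* \geq p$, then travel together leftward to $p$ and separate at time $t' = s^* + (q^* - p)$. During $[s^*, t']$ their common position lies in $[p, q^*]$, and during $[t, s^*]$ drone $j+1$'s position is at least drone $j$'s (which is $\geq p$), so drone $j+1$ stays $\geq p$ on all of $[t, t']$. Since left synchronization is monotone forward in time, drone $j$ is still left synchronized at $t'$, so Lemma~\ref{lemma:left:synchronized:after:separation} yields that drone $j+1$ is left synchronized at $t'$, meaning its position is $\geq p$ on $[t', \infty)$. Splicing the two ranges gives drone $j+1$'s position $\geq p$ for all $s \geq t$, which is exactly left synchronization at time $t$.

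The main obstacle is the case analysis that rules out extraneous direction-change events for drone $j$ before it meets drone $j+1$, together with the edge cases where drones $j-1, j$ or $j, j+1$ might be momentarily colocated at time $t$. These are dispatched by the non-decreasing-gap argument and the observation that drones traveling together always head toward their common endpoint; once past that, the rest is a short application of Lemma~\ref{lemma:left:synchronized:after:separation} plus the monotonicity of left synchronization.
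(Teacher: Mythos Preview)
Your proof is correct and follows essentially the same route as the paper: establish that drones $j$ and $j+1$ are not together at time $t$, wait until they meet (necessarily at or to the right of their common endpoint), invoke the appropriate left-synchronization lemma there, and push the conclusion back to time $t$ using that drone $j+1$ stayed to the right of its left endpoint in the interim. The paper shortens the write-up by citing Lemma~\ref{lemma:left:synchronized:sufficient:a} at the moment of meeting rather than tracing all the way to the separation and citing Lemma~\ref{lemma:left:synchronized:after:separation} directly as you do; also, your case analysis ruling out an interaction with drone $j-1$ is unnecessary, since the proof of Lemma~\ref{lemma:direction:change} already shows that a rightward-moving drone $j<n$ can only reverse via a meet or separation event with drone $j+1$.
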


\begin{proof}
Since drone $j$ is moving right and is to the right to its right endpoint, it cannot be together with drone $j+1$.
Eventually drone $j$ and $j+1$ will meet to the right of their common endpoint,
say at time $t'$,
and then they will move left together.
At that point, by Lemma~\ref{lemma:left:synchronized:sufficient:a},
drone $j + 1$ is left synchronized.
Since drone $j + 1$ is to the right of its left endpoint between time $t$ and $t'$,
it is already left synchronized at time $t$.
\end{proof}

We have now arrived at the key refinement of the argument by Kingston et al.\.
We say that drones $j$ and $j + 1$ \emph{have met} by time $t$
if either they started together, moving in the same direction,
or they have been involved in a meet or bounce event.
It turns out that we have much more information about the behavior of the drones
once this is the case.
Fortunately, it is not hard to show that this happens within one unit of time,
for all the drones uniformly.

\begin{lemma}
For every $j < n$, drones $j$ and $j+1$ have met by time 1.
\end{lemma}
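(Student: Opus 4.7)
The plan is to argue by contradiction, assuming drones $j$ and $j+1$ have not met by time $1$ and deriving an impossible inequality between their positions.

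The first step uses the fact, established inside the proof of Lemma \ref{lemma:direction:change}, that drone $j$ (with $j < n$) moving right can change direction only via a meet or separation event with drone $j+1$. Under the contradiction hypothesis, no such event occurs in $[0,1]$, so drone $j$ never turns from right to left: once it is moving right, it remains so. Combined with the symmetric fact that drone $j$'s possible left-to-right turns come only from drone $j-1$ or the left border (if $j = 1$), this forces drone $j$'s direction history on $[0,1]$ to consist of a (possibly empty) left-moving phase followed by a right-moving phase, with a single switch at some time $s \in [0,1]$ (set $s = 0$ if drone $j$ starts moving right and $s = 1$ if it never turns). Symmetrically, drone $j+1$ moves right on $[0, u]$ and left on $[u, 1]$ for some $u \in [0, 1]$. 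This gives $x_j(1) = x_j(0) + 1 - 2s$ and $x_{j+1}(1) = x_{j+1}(0) + 2u - 1$, and the requirement that positions lie in $[0, 1]$ forces $x_j(s) = x_j(0) - s \ge 0$ and $x_{j+1}(u) = x_{j+1}(0) + u \le 1$, so $s \le x_j(0)$ and $u \le 1 - x_{j+1}(0)$.

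Writing $g(t) = x_{j+1}(t) - x_j(t)$, the above bounds give $s + u \le 1 - g(0)$, whence
\[
  g(1) = g(0) + 2(s+u) - 2 \le -g(0) \le 0.
\]
On the other hand, the contradiction hypothesis implies $g(t) > 0$ for every $t \in (0, 1]$, since otherwise drones $j$ and $j+1$ would coincide at some positive time, producing a meet or bounce event. Thus $g(1) > 0$, contradicting $g(1) \le 0$. The boundary case $g(0) = 0$ does not even arise under the contradiction hypothesis, since two drones starting at the same position already count as having met by time $0$---either via the ``same direction'' clause of the definition or via the separation/bounce event triggered at time $0$ once they share estimates.

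The subtlest part of the argument is the first step: one must carefully enumerate all the event types that can alter drone $j$'s direction in $[0, 1]$---meetings with and separations from drone $j-1$, and a possible border event when $j = 1$---and verify that, absent any interaction with drone $j+1$, each such event produces only a left-to-right transition for drone $j$.
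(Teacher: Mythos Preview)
Your proof is correct and follows essentially the same approach as the paper: both arguments rest on the observation that, before drones $j$ and $j+1$ interact, drone $j$'s trajectory is a left-moving segment followed by a right-moving segment (turning at some $w \ge 0$) and drone $j+1$'s is right-then-left (turning at some $z \le 1$), from which a simple distance computation shows they must meet by time $1$. The paper computes the meeting time directly as $z - w - (y-x)/2 \le 1$, whereas you recast the same arithmetic as a contradiction by comparing positions at time $1$, and you are more explicit than the paper in justifying the single-turn structure via Lemma~\ref{lemma:direction:change}; but the underlying geometry is identical.
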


\begin{proof}
Intuitively, the worst case is where $j$ and $j+1$ start close together
with $j$ moving to the left and $j+1$ moving to the right.
Eventually, $j$ turns around at or before it reaches $0$
and $j+1$ turns around at or before it reaches $1$, and then $j$ and $j + 1$ will meet.
At that point, together they have traveled at most the twice the length of the interval,
which means that each one has traveled at most one unit of distance.

We can make this argument more rigorous as follows.
Suppose drone $j$ starts at position $x$ and drone $j + 1$ starts at position $y \ge x$.
Furthermore, let $w$ be the position of drone $j$ when it first moves right
(so $w = x$ if $j$ starts moving right,
and otherwise $w$ is the position where drone $j$ first turns around),
and let $z$ be the position of drone $j+1$ when it first moves left.
Then the total distance traveled by both drones before they meet is $2(z - w) - (y - x)$,
which means the drones meet at time $z - w - (y - x) / 2 \le z - w \le 1$.
\end{proof}

\begin{lemma}
  \label{lemma:moving:left:after:separation}
Suppose that at time $t$, drone $j$ is moving to the left and
drone $j+1$ is not together with drone $j$.
Suppose also that $j$ and $j+1$ have met by time $t$.
Let $t'$ be the last time before time $t$ that drones $j$ and $j+1$ bounced or separated.
Then $j$ has been moving left since time $t'$.
\end{lemma}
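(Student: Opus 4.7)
My plan is to show that no event in $(t', t]$ can reverse drone $j$'s direction from right back to left, so that if $j$ is moving left at $t$ it has been moving left throughout the interval. Immediately after the bounce or separation at $t'$, drone $j$ is moving to the left, since after such an event the left member of the pair heads into its own interval away from the common endpoint.

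The first substantive step is to rule out any meet, bounce, or separation of drones $j$ and $j+1$ in the interval $(t', t]$. Bounces and separations in the open part $(t', t)$ are excluded by the maximality of $t'$. A meet in this window would force the two drones to travel together to their common endpoint and then either bounce at that moment (if the meet already occurs at the endpoint) or separate a short time later; any such induced bounce or separation occurring in $(t', t)$ would again contradict maximality of $t'$, while one at time $\ge t$ would leave the two drones coincident at $t$, contradicting the hypothesis that $j+1$ is not together with $j$.

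Once $j,j+1$-interactions are excluded on the window, the argument concludes cleanly. Suppose for contradiction that $j$ turns from left to right at some time in $(t', t]$. For $j$ to be moving left again at $t$, there must be a later switch back from right to left; but the only events that can effect such a reversal for $j$ are a meet, bounce, or separation with $j+1$, or a right-border event (possible only for $j = n$). Events with $j+1$ are excluded by the previous step, and a right-border event is incompatible with the hypothesis that $j+1$ exists. Hence no such reversal occurs, and $j$ has been moving left throughout $(t', t]$. The main obstacle I anticipate is the boundary-case analysis in the first step, specifically the careful treatment of a putative meet in $(t', t)$ whose induced separation would fall exactly at time $t$; this is resolved by reading the non-together hypothesis as asserting that the two drones do not coincide at $t$.
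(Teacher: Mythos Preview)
Your argument is correct and follows essentially the same idea as the paper's proof: a right-to-left reversal of drone $j$ in $(t',t]$ would require a meet, bounce, or separation with drone $j+1$, and any such event (or the separation it induces, using that $j$ and $j+1$ are not together at $t$) contradicts the maximality of $t'$. The paper argues this more directly, without first establishing the stronger claim that no $j,j+1$-interaction of any kind occurs in $(t',t]$ and without invoking the initial direction at $t'$; but your extra steps are sound and the core logic is the same.
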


\begin{proof}
If at some point between $t'$ and $t$ drone $j$ was moving to the right,
something must have turned it to the left.
But that can only have been a meet or separation or bounce event.
If it was a meet event,
the fact that $j$ and $j+1$ are not together at time $t$ means there was also
a separation event.
Both situations contradict
the fact that $t'$ is the last time before time $t$ that drones $j$ and $j+1$
bounced or separated.
\end{proof}

\begin{lemma}
  \label{lemma:induction:step}
Suppose $j < n$ and at time $t$, drones $1,\ldots, j$ are left synchronized
and drone $j$ and $j + 1$ have met.
Then at time $t + 1/n$, drones $j + 1$ is left synchronized as well.
\end{lemma}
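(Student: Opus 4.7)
The plan is to case-split on the joint state of drones $j$ and $j+1$ at time $t$, and in each case to exhibit a moment no later than $t+1/n$ at which one of Lemmas~\ref{lemma:left:synchronized:after:separation}, \ref{lemma:left:synchronized:sufficient:a}, or \ref{lemma:left:synchronized:sufficient:b} certifies that $j+1$ is left synchronized.

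If $j$ and $j+1$ are together at $t$, then Lemma~\ref{lemma:left:synchronized:sufficient:a} directly handles the together-moving-left subcase, while together-moving-right forces them to lie in $[(j-1)/n,\, j/n)$ (since $j$ is left synchronized and together-moving-right means heading toward the common endpoint $j/n$ from below), so they reach $j/n$ and separate within $1/n$, triggering Lemma~\ref{lemma:left:synchronized:after:separation}. If they are not together and $j$ is moving right at $t$, then either $j$ is already at or to the right of $j/n$ (invoking Lemma~\ref{lemma:left:synchronized:sufficient:b} at $t$), or $j$ is below $j/n$ and reaches $j/n$ within $1/n$; along the way it either meets $j+1$ and the pair escorts to $j/n$ for a separation there (Lemma~\ref{lemma:left:synchronized:after:separation}), or reaches $j/n$ alone still moving right (Lemma~\ref{lemma:left:synchronized:sufficient:b}).

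The subtle case is when $j$ and $j+1$ are not together and $j$ is moving left at $t$. Let $t'$ be the last time they bounced or separated. By Lemma~\ref{lemma:moving:left:after:separation}, $j$ has been moving left continuously since $t'$, so $j(t) = j/n - (t - t')$ and left synchronization forces $t - t' \le 1/n$. Looking forward, since $j$ is left synchronized from $t$ on and $t'+1/n \ge t$, Lemma~\ref{lemma:direction:change} pins $j$'s turn right to exactly $(j-1)/n$ at time $t'+1/n$, after which $j$ moves right at unit speed. Applying Lemma~\ref{lemma:direction:change} to $j+1$ likewise gives that its next turn happens at some time $\tau \ge t'+1/n$ at a position $q \ge (j+1)/n$; the key observation is that $j+1$ travels right at unit speed throughout $[t', \tau]$ (any escort by $j+2$ does not affect this speed), so $q = j/n + (\tau - t')$. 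A short collision computation then shows that $j$ and $j+1$ next meet at time $\tau+1/n$ at position $(j-1)/n + (\tau-t') \ge j/n$, proceed together down to $j/n$, and separate at time $2\tau - t'$; during this entire cycle $j+1 \ge j/n$, and Lemma~\ref{lemma:left:synchronized:after:separation} applied at the separation $2\tau - t'$ propagates this inequality forever forward. Hence $j+1$ is in fact already left synchronized at $t'$, and so certainly by $t+1/n$.

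The main obstacle is this last subcase. One has to combine the forward-only nature of left synchronization (to pin down $j$'s trajectory after $t$ even though $j$ need not have been left synchronized before), the invariance of $j+1$'s rightward unit speed under any escort with $j+2$ (to pin down both $\tau$ and $q$), and the small geometric computation that keeps the next meeting position at or above $j/n$.
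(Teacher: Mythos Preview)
Your proof is correct and uses the same case decomposition as the paper (split on whether $j$ is moving right, moving left together with $j+1$, or moving left and apart from $j+1$). The first two cases are handled essentially as in the paper.

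The difference is in the third case. You track one full cycle forward: you follow $j$ down to $(j-1)/n$ at time $t'+1/n$, follow $j+1$ out to $q\ge (j+1)/n$ at time $\tau$, compute the next meeting at $(j-1)/n+(\tau-t')\ge j/n$ and the ensuing separation at time $2\tau-t'$, apply Lemma~\ref{lemma:left:synchronized:after:separation} at that separation (where $j$ is left synchronized since $2\tau-t'>t$), and then propagate back to $t'$ using the fact that $j+1\ge j/n$ on $[t',2\tau-t']$. This is valid, but it effectively reproves one loop of Lemma~\ref{lemma:left:synchronized:after:separation} by hand.

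The paper short-circuits all of this with a single observation: since $j$ has been moving left throughout $[t',t]$ (Lemma~\ref{lemma:moving:left:after:separation}) and $j$ is at or to the right of $(j-1)/n$ at time $t$, it was at or to the right of $(j-1)/n$ throughout $[t',t]$; combined with left synchronization from $t$ onward, this means $j$ was \emph{already} left synchronized at $t'$. Then Lemma~\ref{lemma:left:synchronized:after:separation} applies directly at the separation time $t'$, and you are done. Your remark that ``$j$ need not have been left synchronized before'' is precisely where you miss this: in fact it must have been. The paper's route avoids the forward trajectory computation entirely and needs no geometric collision calculation; your route, while longer, has the minor virtue of making one cycle of the synchronized dynamics completely explicit.
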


\begin{proof}
Suppose drone $j$ is left synchronized.
If it is moving to the right, it will be at or to the right of its right endpoint within time $1 / n$,
possibly having met drone $j + 1$ along the way.
At that point drone $j + 1$ is left synchronized,
by Lemmas~\ref{lemma:left:synchronized:after:separation} and \ref{lemma:left:synchronized:sufficient:b}.
If drone $j$ is moving to the left and it is together with drone $j + 1$,
drone $j + 1$ is left synchronized at time $t$ by Lemma~\ref{lemma:left:synchronized:sufficient:a}.

Finally, suppose drone $j$ is moving to the left and is not together with drone $j + 1$.
Since we are assuming drones $j$ and $j + 1$ have met by time $t$,
there is a $t' < t$ where drones $j$ and $j + 1$ bounced or separated last.
By Lemma~\ref{lemma:moving:left:after:separation}, drone $j$ has been moving left since time $t'$.
Since drone $j$ is at or to the right of its left endpoint at time $t$,
it was to the right of its left endpoint between time $t'$ and $t$.
Since drone $j$ is left synchronized at time $t$, this shows that it was already
left synchronized at $t'$.
By Lemma~\ref{lemma:left:synchronized:after:separation},
drone $j + 1$ was also left synchronized at time $t'$, and hence is left synchronized at time $t$.
\end{proof}

Since drone $1$ is always left synchronized and all the drones have met by time 1,
by induction on $i < n$
we have that drones $1, \ldots, i$ are left synchronized at time $1 + (i-1) / n$.
Taking $i = n$ yields Theorem~\ref{theorem:upper:bounds}.

It is not hard to show that Theorem~\ref{theorem:upper:bounds} is sharp. To attain the worst-case behavior, let all $n$ drones start arbitrarily close to the left border, moving right independently. After close to one unit of time they reach the right border, at which point the rightmost drone turns left and quickly meets all the others. The group then moves to the left, with each drone separating from the group at its left endpoint. Drones 1 and 2 separate at their common endpoint at time arbitrarily close to $2 - 1/n$, at which point all the drones are synchronized. This is exactly the worst-case scenario presented by Kingston et al.

We end this section by proving Theorem~\ref{theorem:upper:bounds:combined}.
In that theorem we assume that all drones start with incorrect information.
Without such an assumption, the theorem is false; for example, if all drones start with correct information,
then the fact that Theorem~\ref{theorem:upper:bounds} is sharp means that Theorem~\ref{theorem:upper:bounds:combined} does not hold in that case.
However, the condition is not very strong,
and we can find worst-case configurations for phase 1 where this condition holds,
since we can modify the correct information by adding a small amount
to the estimated left and right endpoints of the interval.

\begin{lemma}
  Suppose all drones start with incorrect information, $j < n$,
  and at time $t$ all drones have correct information and
  drones $1,\ldots, j$ are left synchronized.
  Then at time $t+1-j/n$, all drones are synchronized.
\end{lemma}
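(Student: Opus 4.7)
My plan is to propagate left synchronization from drones $1, \ldots, j$ to all $n$ drones by iterating Lemma~\ref{lemma:induction:step}, and to obtain right synchronization by a parallel symmetric induction.

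First I would record a preliminary observation: by time $t$, every adjacent pair of drones has met, in the technical sense used in Section~\ref{section:upper:bounds}. This is where the hypothesis that all drones start with incorrect information enters. Drone $1$ can acquire the correct value $a = 0$ only by hitting the left border, so it must have done so. For each $i > 1$, the correct left-endpoint information that drone $i$ holds at time $t$ can only have reached it via a chain of meetings $(1, 2), (2, 3), \ldots, (i-1, i)$, so the adjacent pair $(i-1, i)$ has met by time $t$. Since the ``has met'' relation persists, this remains true at every later time.

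Next comes the main induction. Let $P(k)$ be the statement ``drones $1, \ldots, j + k$ are left synchronized at time $t + k/n$''. The base case $P(0)$ is exactly the hypothesis of the lemma. For the inductive step, when $j + k < n$, I would apply Lemma~\ref{lemma:induction:step} (with the role of $j$ played by $j + k$) to the configuration at time $t + k/n$: drones $1, \ldots, j + k$ are left synchronized and drones $j + k$ and $j + k + 1$ have met, so drone $j + k + 1$ is left synchronized at time $t + (k+1)/n$. Taking $k = n - j$ gives all drones left synchronized by time $t + 1 - j/n$.

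Right synchronization is obtained from the left--right mirror of this whole argument: Lemmas~\ref{lemma:direction:change}--\ref{lemma:induction:step} all have natural symmetric counterparts obtained by swapping ``left'' with ``right'' and reversing the drone ordering, and they drive an analogous induction starting from the base fact that drone $n$ is always right synchronized. The main obstacle I anticipate here is the bound: a naive symmetric induction starting only from drone $n$ takes $n - 1$ iterations and yields only $t + 1 - 1/n$, so getting the tighter $t + 1 - j/n$ requires coupling the two sides rather than running them independently. The idea is that each left-synchronization step above is driven by a separation of drones $j + k$ and $j + k + 1$ at their common endpoint $(j + k)/n$, and that very same configuration supplies precisely the input that the symmetric right-synchronization step needs, so the two inductions can be made to advance in lockstep.
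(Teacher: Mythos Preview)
Your left-synchronization argument matches the paper's: establish that every adjacent pair has interacted by time $t$, then iterate Lemma~\ref{lemma:induction:step} from drone $j$ up to drone $n$. The paper is slightly more careful on one point you gloss over. Rather than merely recording that adjacent pairs have \emph{met}, the paper observes that each pair $(i,i+1)$ was together at some $t'\le t$ with \emph{both} correct left and right information, and hence has correct information from $t'$ onward. This matters because the lemmas of Section~\ref{section:upper:bounds} are proved under the blanket assumption of correct estimates, and the proof of Lemma~\ref{lemma:induction:step} can reach back to the last separation of $j$ and $j+1$, which may predate $t$; one must know the pair already had correct information at that moment. The paper handles this by remarking that the chain of lemmas only uses correctness of the two drones involved, which is guaranteed after their $t'$.

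Your coupling proposal for right synchronization, however, does not work as stated. The left induction produces, at each stage, a separation of drones $j+k$ and $j+k+1$ at their common endpoint with drone $j+k$ left synchronized; the mirror of Lemma~\ref{lemma:left:synchronized:after:separation} would make drone $j+k$ \emph{right} synchronized only if drone $j+k+1$ were already right synchronized, which is exactly what has not yet been established. The right induction runs from drone $n$ downward and the left induction from drone $j$ upward; they consume different hypotheses and do not feed one another, so they cannot advance in lockstep. The paper itself does not spell out how right synchronization is obtained within the bound $t+1-j/n$; its proof runs only the left induction and then asserts ``synchronized.'' For the paper's principal use (Theorem~\ref{theorem:upper:bounds:combined}, i.e.\ $j=1$, where the symmetric argument starting from drone $n$ gives the identical bound $t+1-1/n$) this is harmless, but the general-$j$ claim as written needs an additional argument that neither you nor the paper supplies.
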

\begin{proof}
  The only way that all drones have correct information at time $t$ is that drone 1 hits the left border,
  and then this correct left information propagates through all drones to drone $n$.
  Similarly, the correct right information has propagated through all drones from drone $n$ to drone 1.
  For two consecutive drones $i$ and $i+1$ this means that at some time $t'\le t$ they were together
  with both correct left and correct right information.
  Any time after $t'$ the drones $i$ and $i+1$ had correct information.
  Inspecting the proof of Lemma~\ref{lemma:induction:step}, we see that it also holds in this case
  (the lemmas in this section only use that drones $j$ and $j+1$ have correct information, not that the other drones have correct information).
  Therefore, by induction we see that if drones $1,\ldots, j$ are left synchronized, then at time $t+1-j/n$, all drones are synchronized.
\end{proof}
Since drone 1 is always left synchronized, this implies Theorem~\ref{theorem:upper:bounds:combined}.
In Section~\ref{section:lower:bounds} we present an example where all drones get correct information at time $4-1/n$, but drone 2 is already left synchronized at that time. Theorem~\ref{theorem:upper:bounds:combined} then guarantees that the drones will be synchronized at time $(4-1/n) + (1-2/n) = 5-3/n$.

\section{Lower bounds}
\label{section:lower:bounds}
  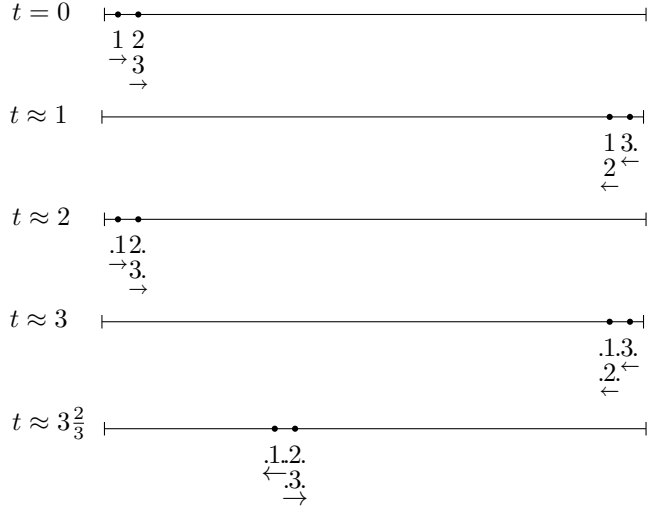
\begin{figure}
    \begin{center}
    \begin{tikzpicture}[scale=0.9]
    \node[anchor=west] at (-1.5,0) {$t=0$};
    \draw (0, 0) -- (8, 0);
    \draw (0,-0.1) -- (0, 0.1);
    \draw (8, -0.1) -- (8, 0.1);
    \fill (0.2, 0) circle (0.3ex) node [below=5pt] {$\Shortunderstack[c]{1 {$\shortrightarrow$}}$};
    \fill (0.5, 0) circle (0.3ex) node [below=5pt] {$\Shortunderstack[c]{2 3 {$\shortrightarrow$}}$};
  \end{tikzpicture}
  \begin{tikzpicture}[scale=0.9]
    \node[anchor=west] at (-1.5,0) {$t\approx 1$};
    \draw (0, 0) -- (8, 0);
    \draw (0,-0.1) -- (0, 0.1);
    \draw (8, -0.1) -- (8, 0.1);
    \fill (7.5, 0) circle (0.3ex) node [below=5pt] {$\Shortunderstack[c]{1 2 {$\shortleftarrow$}}$};
    \fill (7.8, 0) circle (0.3ex) node [below=5pt] {$\Shortunderstack[c]{3\kern-0.07em. {$\shortleftarrow$}}$};
  \end{tikzpicture}
  \begin{tikzpicture}[scale=0.9]
    \node[anchor=west] at (-1.5,0) {$t\approx 2$};
    \draw (0, 0) -- (8, 0);
    \draw (0,-0.1) -- (0, 0.1);
    \draw (8, -0.1) -- (8, 0.1);
    \fill (0.2, 0) circle (0.3ex) node [below=5pt] {$\Shortunderstack[c]{.\kern-0.07em1 {$\shortrightarrow$}}$};
    \fill (0.5, 0) circle (0.3ex) node [below=5pt] {$\Shortunderstack[c]{2\kern-0.07em. 3\kern-0.07em. {$\shortrightarrow$}}$};
  \end{tikzpicture}
  \begin{tikzpicture}[scale=0.9]
    \node[anchor=west] at (-1.5,0) {$t\approx 3$};
    \draw (0, 0) -- (8, 0);
    \draw (0,-0.1) -- (0, 0.1);
    \draw (8, -0.1) -- (8, 0.1);
    \fill (7.5, 0) circle (0.3ex) node [below=5pt] {$\Shortunderstack[c]{.\kern-0.07em1\kern-0.07em. .\kern-0.07em2\kern-0.07em. {$\shortleftarrow$}}$};
    \fill (7.8, 0) circle (0.3ex) node [below=5pt] {$\Shortunderstack[c]{3\kern-0.07em. {$\shortleftarrow$}}$};
  \end{tikzpicture}
  \begin{tikzpicture}[scale=0.9]
    \node[anchor=west] at (-1.5,0) {$t\approx 3\frac23$};
    \draw (0, 0) -- (8, 0);
    \draw (0,-0.1) -- (0, 0.1);
    \draw (8, -0.1) -- (8, 0.1);
    \fill (2.516, 0) circle (0.3ex) node [below=5pt] {$\Shortunderstack[c]{.\kern-0.07em1\kern-0.07em. {$\leftarrow$}}$};
    \fill (2.816, 0) circle (0.3ex) node [below=5pt] {$\Shortunderstack[c]{.\kern-0.07em2\kern-0.07em. .\kern-0.07em3\kern-0.07em. {$\rightarrow$}}$};
  \end{tikzpicture}
  \end{center}
  \caption{A depiction of the worst-case scenario for phase 1 that we found for $n=3$.
    A dot to the right (left) of a drone indicates that it has correct right (left) information.}
  \label{figure:threedrones}
\end{figure}

Recall that Davis, Humphrey, and Kingston \cite{davis:et:al:19} produced counterexamples to
Conjecture~\ref{conjecture:a} by presenting configurations of three drones
that do not all have correct information about the state of affairs
until time $3 + 1/2$. In this section, we improve this
lower bound for three drones to $3 + 2/3$ and modify their example to provide
new lower bounds for every $n \ge 3$. Our strategy is similar to theirs.
We let all the drones start near the left border of the interval,
moving to the right.
We put the drones in groups that separate just before they hit the right border,
so that most drones in each group do not learn the information about the right border
from the group to their right.
After this, the drones will move back in groups to the left border,
and then we try to repeat this process:
let all groups separate just before they hit the left border,
and then the drones move to the right again.

It is conceivable that things can be arranged so that $n$ drones
shuttle back and forth on the order of $n$ times before information
about the right border has propagated all the way to the left and vice-versa.
To rule out such a possibility we have to take into consideration the
specific algebraic calculations described in Section~\ref{section:the:problem},
and the extent to which they constrain the drones' estimates.
In Section~\ref{section:algebraic:calculation}, we consider a
simplified version of the calculation, and show that,
with that version (which does not directly apply to the original problem),
the type of behavior described above cannot occur.
But our counterexamples show that with the actual calculations
presented in Section~\ref{section:the:problem},
we can obtain slightly worse behavior. In particular, it can
take close 4 units of time before the drones have correct estimates.

Let us start with the smallest interesting situation, with $n = 3$ drones.
Our counterexample is summarized in Fig.~\ref{figure:threedrones}.
\begin{itemize}
  \item The drones start near the left border, moving to the right,
  with drones 2 and 3 moving as a group.
  \item Just before the drones hit the right border, drones 2 and 3 separate,
  and drone 2 meets drone 1.
  Drone 3 hits the border, learning the true position of the right border,
  and all drones move to the left.
  \item Just before the drones hit the left border, drones 1 and 2 separate,
  and drone 2 meets drone 3 (learning the true position of the right border).
  Drone 1 hits the border, learning the true position of the left border,
  and all drones move to the right again.
  \item This process repeats at the right border,
  after which drones 1 and 2 have complete knowledge of the left and the right border.
  \item Drones 1 and 2 move to their actual common endpoint, and separate.
  Then drone 3 also learns the true left endpoint, and moves to its common endpoint with drone 2.
\end{itemize}
To achieve this, choose $N$ large, and let $\delta=1/N$.
Let drone $1$ start at $x_1=0$ and drone $2$ start at $x_2=\delta$, both with direction $d_i=1$.
The initial information for drones $1$ and $2$ are as follows:
\begin{align*}
  ((a_1,\ell_1),(b_1,m_1))&=((-2+4\delta,N),(1,N))\\
  ((a_2,\ell_2),(b_2,m_2))&=((0,N),(2-2\delta,N))
\end{align*}
The initial information for drone 3 is fully determined
by the fact that it starts in the same group as drone 2:
\begin{align*}
x_3&=x_2\\
((a_3,\ell_3),(b_3,m_3)) &= ((a_2,\ell_2+1),(b_2,m_2-1)).
\end{align*}
This initial information results in the behavior in Fig.~\ref{figure:threedrones}.
Drone 2 turns around at positions $1-\delta$, then $\delta$, then $1-3\delta$ and then $1/3$.
Only after the last time that drone 2 turns around do all drones have correct estimates.
This means that by taking $N$ large enough,
it can require (arbitrarily close to) $3 + 2/3$ units of time before every drone has correct estimates.
It takes another $1/3$ unit of time until drone 3 is synchronized.
Therefore, the drones need time (arbitrarily close to) $4$ before all drones are synchronized.

\begin{figure}
  \begin{center}
  \includegraphics[width=0.45\textwidth]{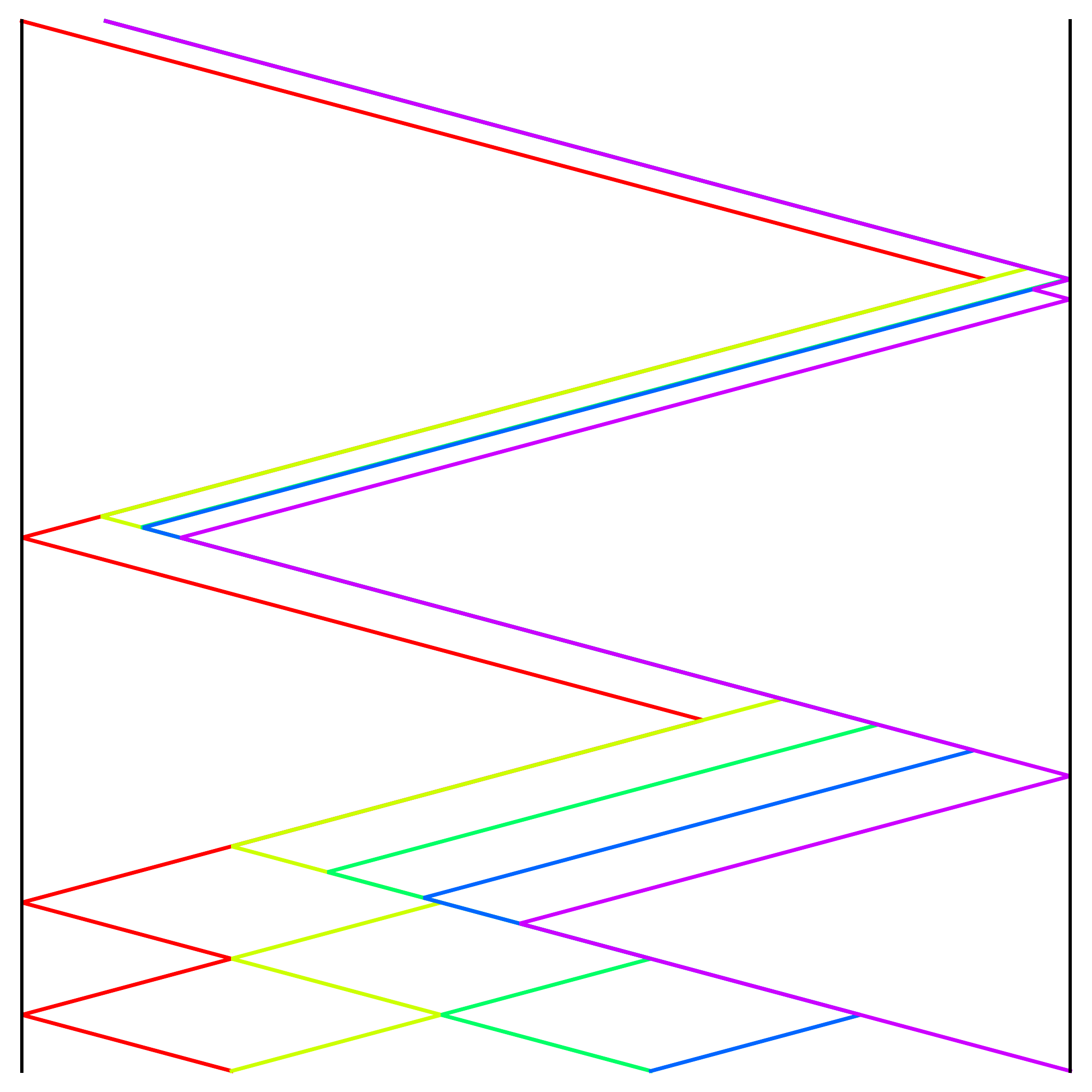}
  \includegraphics[width=0.45\textwidth]{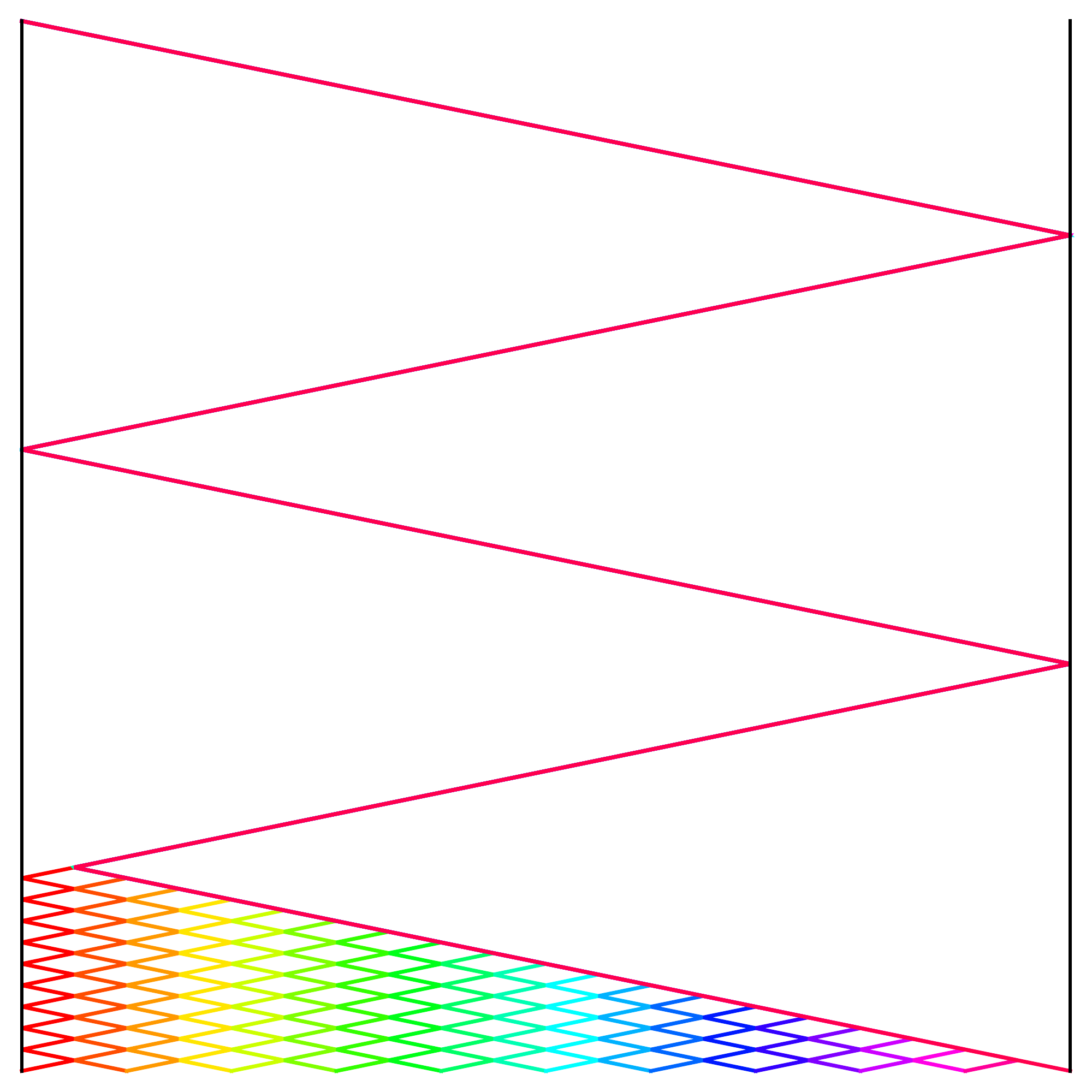}
  \end{center}
  \caption{A depiction of the worst-case scenario we found for $n=5$ (left) and $n=20$ (right).
  In the left diagram $N=25$ and $N=10^6$ in the right diagram.
  In the left diagram $N$ is chosen to be small to show the behavior better,
  but this choice of $N$ gives rise to an artifact that doesn't happen for high $N$, which is that drone $n$ turns around a second time around time 1.}
  \label{figure:moredrones}
\end{figure}

We can generalize this by adding more drones at the same position as drone 2 and 3.
If there are more than 2 drones,
the positions at which the intermediate drones turn are still roughly the same,
except that the fourth turn is at $x=1/n$.
This means that it takes time (arbitrarily close to) $4-1/n$
until all drones have correct estimates.
After that it takes time $1-2/n$ until drone $n$ is synchronized,
for a total of (arbitrarily close to) $5-3/n$ for all drones to be synchronized.

The situation is displayed in Fig.~\ref{figure:moredrones} for $(n,N)=(5,25)$ and $(n,N)=(20,10^6)$.
In these diagrams every drone has a unique color
(though if drones travel close to each other only the rightmost drone is drawn),
time flows down along the vertical axis, and the interval is on the horizontal axis.

We have explored other methods for finding lower bounds with more than 3 drones
using Mathematica~\cite{Mathematica}.
One method we used was to manually pick all starting points and desired points of separation,
enter all constraints into Mathematica,
and use Mathematica's \lstinline{FindInstance} function or \lstinline{Reduce} function
to find a solution.
However, this problem quickly becomes intractable for these functions because of the subtle
interplay between the equations like $R(\alpha_2,\beta_2) = 1-\delta$ and the inequalities like
$a_i\le x_i \le b_i$ and $m_i\ge 0$.
We resorted to heuristics, strategically choosing some values and letting Mathematica solve for
the remaining variables. For example, we often set the estimated number drones to the left or right
to a specific drone to be fixed large integers.
We have found an example with five drones in three groups,
which also takes time roughly time $4-\frac15$ for phase 1, and time $5-\frac35$ for phase 1 + 2,
similar to our previous example.
Drone 1 starts alone at position 0, drones 2 and 3 start together at position $\epsilon=\frac1{100}$
and drone 4 and 5 start together at position $2\epsilon$.
The initial estimates for drones 1, 2 and 4 are
\begin{align*}
(\alpha_1,\beta_1)&=((-250098,10^8),(1,10^6)\\
(\alpha_2,\beta_2)&=((-249999,10^8),(2501,10^6))\\
(\alpha_4,\beta_4)&=((-232446,10^8),(13.94,5569))
\end{align*}
The initial estimates for drones 3 and 5 are determined by those of drones 2 and 4.
The resulting diagram, not shown, is similar to that of Fig.~\ref{figure:moredrones}.

\section{Towards an upper bound on phase 1}
\label{section:algebraic:calculation}

\newcommand{\ipos}[4]{\frac{#1 #4 + #3 #2 }{#2 + #4}}
\newcommand{\iposi}[2]{\ipos{a_{#1}}{\ell_{#1}}{b_{#2}}{m_{#2}}}

The results described in Section~\ref{section:lower:bounds}
involve crafting examples where incorrect estimates on the part of the drones
lead groups of drones to misjudge their common endpoints and shuttle back
and forth across the interval.
From a combinatorial perspective, it is not hard to imagine sequences of events
where $n$ drones keep regrouping in pairs and traveling back and forth up to $n$
times before all the border information has propagated to all the drones.
The question is whether the algebraic calculations of the common endpoints
make it possible to realize this behavior. Getting a bound on phase 1
that is independent of $n$ requires ruling this out.

In this section, we take small steps towards obtaining a better
understanding of the algebraic constraints.
If $\alpha = (a, \ell)$ is a pair consisting of a real number and a
nonnegative integer,
we will write $\alpha^+$ for $(a, \ell+1)$ and $\alpha^-$ for $(a, \ell-1)$.
Remember that a pair $(\alpha, \beta)$ represents a drone's estimates
as to the left border,
the number of drones to the left,
the right border, and the number of drones to the right.
Remember also that we write
$L(\alpha, \beta)$ for the left endpoint of the drone's interval based on
that estimate, and $R(\alpha, \beta)$ for the right endpoint.
We will also write $0$ for the estimate $(0, 0)$ adopted by the leftmost drone
when it reaches the left border, and $1$ for the estimate $(1, 0)$
adopted by the rightmost drone when it reaches the right border.

Consider the example in Fig.~\ref{figure:threedrones}.
The estimates of the three drones on each line can be represented as follows:
\begin{align*}
  &(\alpha_1, \beta_1)   & &(\alpha_2  , \beta_2) & &(\alpha_2^+   , \beta_2^-) \\
  &(\alpha_1, \beta_2^+) & &(\alpha_1^+, \beta_2) & &(\alpha_2^+   , 1) \\
  &(0       , \beta_2^+) & &(\alpha_1^+, 1^+)     & &(\alpha_1^{++}, 1) \\
  &(0       , 1^{++})    & &(0^+       , 1^+)     & &(\alpha_1^{++}, 1) \\
  &(0       , 1^{++})    & &(0^+       , 1^+)     & &(0^{++}       , 1)
\end{align*}
What makes the example effective is that:
\begin{itemize}
  \item $R(\alpha_2, \beta_2) = L(\alpha_2^+, \beta_2^-)$ is close to 1.
  \item $R(\alpha_1, \beta_2^+) = L(\alpha_1^+, \beta_2)$ is close to 0.
  \item $R(\alpha_1^+, 1^+) = L(\alpha_1^{++}, 1)$ is close to 1.
  \item $R(0, 1^{++}) = L(0^+, 1^+) = 1/3$.
\end{itemize}
To understand the extent to which we can or cannot improve the lower bound,
we need to understand the constraints that arise when drones share
information in such a way.

The $+1$ and $-1$ terms make calculation more difficult,
so we focus on a simpler approximation to the problem.
Given $\alpha = (a, \ell)$ and $\beta = (b, m)$, write
\[
P(\alpha, \beta) = \ipos{a}{\ell}{b}{m}
\]
for an approximation to $L(\alpha, \beta)$ and $R(\alpha, \beta)$.
In other words, we ignore the terms $+1$ in the calculations
in Section~\ref{section:the:problem},
which is reasonable when $\ell$ and $m$ are large.

Now, suppose three groups of drones start with estimates that are roughly
$(\alpha_1, \beta_1)$, $(\alpha_2, \beta_2)$, and $(\alpha_3, \beta_3)$.
Sharing information between the first two yields roughly $(\alpha_1, \beta_2)$,
and sharing information between the second two yields roughly $(\alpha_2, \beta_3)$.
Sharing information between these again yields roughly $(\alpha_1, \beta_3)$.
Under these simplifications, we can ask the following question:
are there choices of $(\alpha_1, \beta_1)$, $(\alpha_2, \beta_2)$ and
$(\alpha_3, \beta_3)$ such that
\begin{itemize}
\item $P(\alpha_1, \beta_1)$, $P(\alpha_2, \beta_2)$, and
  $P(\alpha_3, \beta_3)$ are close to 1,
\item $P(\alpha_1, \beta_2)$ and $P(\alpha_2, \beta_3)$ are close to 0, and
\item $P(\alpha_1, \beta_3)$ is close to 1?
\end{itemize}
The following theorem shows that the answer is negative.

\begin{theorem}
\label{theorem:algebraic:simplification}
If
\[
  \max(P(\alpha_1, \beta_2), P(\alpha_2, \beta_3)) \le \min(P(\alpha_2, \beta_2), P(\alpha_1, \beta_3))
\]
then
\[
  P(\alpha_1, \beta_2) = P(\alpha_2, \beta_3) = P(\alpha_2, \beta_2) = P(\alpha_1, \beta_3)
\]
\end{theorem}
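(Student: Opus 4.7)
My plan is to normalize the hypothesis, clear denominators to obtain a system of four bilinear inequalities, and chain them via shared variables to force equality. First I would set $C = \min(P(\alpha_2,\beta_2), P(\alpha_1,\beta_3))$. Since shifting every $a_i$ and $b_j$ by the same constant shifts $P$ by that constant, I can assume $C = 0$. Writing $\alpha_i = (a_i,\ell_i)$, $\beta_j = (b_j,m_j)$ and setting $x_i = a_i$, $y_j = -b_j$, a direct computation shows $P(\alpha_i,\beta_j) = (m_j x_i - \ell_i y_j)/(\ell_i + m_j)$. Working in the regime of positive weights $\ell_i, m_j > 0$ in which this approximation to the original problem is meaningful, the hypothesis becomes the four bilinear inequalities
\begin{align*}
  m_2 x_1 &\le \ell_1 y_2, & m_3 x_2 &\le \ell_2 y_3, \\
  \ell_2 y_2 &\le m_2 x_2, & \ell_1 y_3 &\le m_3 x_1,
\end{align*}
where the first two encode $P(\alpha_1,\beta_2) \le 0$ and $P(\alpha_2,\beta_3) \le 0$ and the last two encode $P(\alpha_2,\beta_2) \ge 0$ and $P(\alpha_1,\beta_3) \ge 0$.

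Next I would chain the two inequalities in the left column, which share $y_2$, and the two in the right column, which share $y_3$. Multiplying $m_2 x_1 \le \ell_1 y_2$ by $\ell_2$ and $\ell_2 y_2 \le m_2 x_2$ by $\ell_1$ and concatenating through the common term $\ell_1 \ell_2 y_2$ gives
\[
  \ell_2 m_2 x_1 \;\le\; \ell_1 \ell_2 y_2 \;\le\; \ell_1 m_2 x_2,
\]
so $\ell_2 x_1 \le \ell_1 x_2$ after dividing by $m_2 > 0$. The symmetric manipulation of the two right-column inequalities produces $\ell_1 m_3 x_2 \le \ell_1 \ell_2 y_3 \le \ell_2 m_3 x_1$, giving $\ell_1 x_2 \le \ell_2 x_1$. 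These opposite bounds force $\ell_1 x_2 = \ell_2 x_1$.

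Equality of the extremes of each chain then propagates to equality at every intermediate step. The left chain forces $m_2 x_1 = \ell_1 y_2$ and $\ell_2 y_2 = m_2 x_2$, which translate to $P(\alpha_1,\beta_2) = 0 = P(\alpha_2,\beta_2)$, and the right chain similarly yields $P(\alpha_2,\beta_3) = 0 = P(\alpha_1,\beta_3)$. All four quantities therefore equal $C$, as required.

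The main conceptual step I expect to be the hardest is recognizing that after subtracting $C$ and clearing denominators, the four hypothesis inequalities become bilinear and pair naturally through the shared variables $y_2$ and $y_3$, producing opposite bounds on $\ell_2 x_1 - \ell_1 x_2$. Once that structure is seen, the chaining and the propagation of equalities are essentially mechanical; as a byproduct one reads off the characterization $a_i = C + \lambda\ell_i$, $b_j = C - \lambda m_j$ for a common $\lambda$. Degenerate cases in which some $\ell_i$ or $m_j$ vanishes lie outside the intended regime of the approximation and can be excluded by hypothesis.
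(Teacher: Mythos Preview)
Your argument is correct under the standing assumption $\ell_i,m_j>0$, which is exactly the regime in which the approximation $P$ makes sense and which the paper's own proofs also implicitly use. It is, however, a genuinely different route from any of the three proofs in the paper. The paper's first proof introduces the auxiliary monotone functions $v(\alpha,c)=(c-a)/\ell$ and $w(\beta,c)=(b-c)/m$ and runs a single four-term cyclic chain $v(\alpha_2,r_2)=w(\beta_2,r_2)\le\cdots\le v(\alpha_2,r_2)$ through them; the second proof converts each $P$-inequality into an inequality on the interval length $I'(\alpha,\beta)=(b-a)/(\ell+m)$ and closes that cycle; and the third is Barton's center-of-mass observation that $\{\alpha_1,\beta_2\}\cup\{\alpha_2,\beta_3\}$ and $\{\alpha_2,\beta_2\}\cup\{\alpha_1,\beta_3\}$ are the same multiset of point masses. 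Your approach instead normalizes by the translation-invariance of $P$ to set $C=0$, clears denominators, and splits the four bilinear inequalities into two $2$-chains through the shared $y_2$ and $y_3$, obtaining the opposing bounds $\ell_2 x_1\le \ell_1 x_2$ and $\ell_1 x_2\le \ell_2 x_1$. What you gain is a completely elementary, coordinate-level computation that also yields the explicit parametrization $a_i=C+\lambda\ell_i$, $b_j=C-\lambda m_j$ of the equality locus; what the paper's proofs gain is a more conceptual picture (monotone auxiliary functions, common interval length, or center of mass) that explains why the cycle must collapse without the preliminary normalization.
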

\noindent In this theorem only the four values $P(\alpha_2, \beta_2)$, $P(\alpha_1, \beta_2)$,
$P(\alpha_2, \beta_3)$, and $P(\alpha_1, \beta_3)$ are mentioned,
and the constraints are weaker than stated in the original question.
We provide three proofs of this theorem.

\begin{proof}[First proof]
The statement $P(\alpha, \beta) = c$ is equivalent to
$(c-a) / \ell = (b-c) / m$.
If we ignore the length of the drone's own interval, $(c - a) / \ell$ is
the length of the interval for each drone to the left of $a$,
and $(b - c) / m$ is the length of the interval for each drone to the
right of $b$. So the statement $P(\alpha, \beta) = c$ says that
$c$ has the property that these two lengths are the same.

Let $v(\alpha,c) = (c-a) / \ell$ and $w(\beta,c)= (b-c) / m$,
so $v(\alpha,c)$ is strictly increasing in $c$ and
$w(\beta,c)$ is strictly decreasing in $c$.
Write
\begin{align*}
P(\alpha_1, \beta_3)&=r_1, &P(\alpha_2, \beta_2)&=r_2,\\
P(\alpha_1, \beta_2)&=\ell_1, &P(\alpha_2, \beta_3)&=\ell_2.
\end{align*}
The hypothesis of the lemma is that $\ell_i \le r_j$ for each $i$ and $j$.
We have the following chain of inequalities:
\begin{align*}
  v(\alpha_2,r_2)&=w(\beta_2,r_2)\le w(\beta_2,\ell_1)\\
  &=v(\alpha_1,\ell_1)\le v(\alpha_1,r_1)\\
  &=w(\beta_3,r_1)\le w(\beta_3,\ell_2)\\
  &=v(\alpha_2,\ell_2)\le v(\alpha_2,r_2).
\end{align*}
Therefore, all inequalities must be equalities, and the fact that
$v$ and $w$ are strictly monotone implies $\ell_2 = r_2 = \ell_1 = r_1$.
\end{proof}

\begin{proof}[Second proof]
Write
\[
I'(\alpha, \beta) = \frac{b - a}{\ell + m}
\]
for the length of each drone's interval based on the information $\alpha, \beta$, and notice that
\begin{equation}
  \label{equation:ipos:expand}
  \tag{$\star$}
  P(\alpha, \beta) = a + \ell I'(\alpha, \beta) = b - m I'(\alpha, \beta).
\end{equation}
This enables us to do comparisons when either $\alpha$ or $\beta$ does not change:
\begin{itemize}
  \item From $P(\alpha, \beta) \le P(\alpha, \beta')$ we can infer $I'(\alpha, \beta) \le I'(\alpha, \beta')$.
  \item From $P(\alpha, \beta) \le P(\alpha', \beta)$ we can infer $I'(\alpha', \beta) \le I'(\alpha, \beta)$.
\end{itemize}
So from
\begin{align*}
  P(\alpha_1, \beta_2) &\le P(\alpha_1, \beta_3), &P(\alpha_2, \beta_3) &\le P(\alpha_1, \beta_3),\\
  P(\alpha_2, \beta_3) &\le P(\alpha_2, \beta_2), &P(\alpha_1, \beta_2) &\le P(\alpha_2, \beta_2)
\end{align*}
we can conclude
\begin{align*}
  I'(\alpha_1, \beta_2) &\le I'(\alpha_1, \beta_3) \le I'(\alpha_2, \beta_3) \le I'(\alpha_2, \beta_2) \\
  &\le I'(\alpha_1, \beta_2)
\end{align*}
and so all these quantities are the same. Using (\ref{equation:ipos:expand}), this means that any two expressions $P(\alpha_i, \beta_j)$ that have either $\alpha_i$ or $\beta_j$ in common are equal, so we have $P(\alpha_1, \beta_2) = P(\alpha_1, \beta_3) = P(\alpha_2, \beta_3) = P(\alpha_2, \beta_2)$, as required.
\end{proof}

We are grateful for Reid Barton for providing us with the following physical interpretation.

\begin{proof}[Third proof]
  If we think of $\alpha = (a, \ell)$ as representing a mass of $1/\ell$ at position $a$
  on the real number line
  and $\beta = (b, m)$ as representing a mass of $1/m$ at position $b$,
  then $P(\alpha, \beta)$ is their combined center of mass.
  The hypothesis of the lemma says that the
  center of mass of $\alpha_1$ and $\beta_2$ and
  the center of mass of $\alpha_2$ and $\beta_3$
  are both less than the center of mass of $\alpha_2$ and $\beta_2$
  and the center of mass of $\alpha_1$ and $\beta_3$.

  But the combined the center of mass of the first two pairs and
  the center of mass of the second two pairs are both equal to the
  center of mass of $\alpha_1$, $\alpha_2$, $\beta_2$, and $\beta_3$
  all together. Since two objects combined have a center of mass
  between the center of mass of each, all four quantities have to be equal.
\end{proof}

Unfortunately, the omissions of the $+1$s in the definition of $P(\alpha, \beta)$
are an oversimplification. When we restore them to the model, the example at the end of
Section~\ref{section:lower:bounds} is a counterexample
to statements analogous to Theorem~\ref{theorem:lower:bounds}.
So we still need to both refine our calculations and
figure out how to use them to obtain an upper bound to phase 1.

\section{Conclusions}

These problems are harder than we thought it would be. Even bounding phase 2 behavior
was difficult,
since it was hard to control the combinatorial explosion of possibilities
induced by the discrete decision points in the algorithm.
A natural strategy is to look for a complexity measure and show that it is
decreasing over time, but we were unable to find such a measure.
Another idea is to argue that configurations can, without loss of generality (for example, without decreasing the time to synchronization), be reduced to
simpler ones,
possibly in a larger space of states and actions.
We also tried using induction on the number of drones,
or focusing on some salient feature of the event history.
We could not get any of these strategies to work.
It was surprising to us that the argument presented in Section~\ref{section:upper:bounds}
works,
because it involves ignoring everything about the first unit of time other than
the fact that each pair of drones has met.
We stumbled upon it only by finding a more complicated argument and simplifying it to the shorter one,
and then we determined that the longer argument was incorrect.

Bounding phase 1 promises to be even harder, since it requires taking
the algebraic constraints into consideration as well as the combinatorial behavior.
Despite the difficulty of the problem,
we are hopeful that the method used to bound the phase 2 behavior,
together with a better understanding of the algebraic constraints,
will lead to a solution.




\paragraph{Acknowledgments.} This work was supported in part by the AFOSR under grant FA9550-18-1-0120
and the Sloan Foundation under grant G-2018-10067.

\bibliographystyle{plain}
\bibliography{progress}

\end{document}